\documentclass{llncs}
\pdfoutput=1

\usepackage{amsmath}
\usepackage{amsfonts}
\usepackage{color}
\usepackage{etoolbox}
\usepackage{mdwtab}
\usepackage{microtype}
\usepackage{multirow}
\usepackage{pifont}
\usepackage{tikz}

\usetikzlibrary{arrows}
\usetikzlibrary{automata}
\usetikzlibrary{calc}
\usetikzlibrary{decorations.pathmorphing}
\usetikzlibrary{positioning}
\usetikzlibrary{shapes}
\usetikzlibrary{shapes.geometric}
\usetikzlibrary{shapes.misc}
\usetikzlibrary{decorations}

\usetikzlibrary{automata, arrows, positioning}
\tikzset{>=stealth', state/.append style={minimum size=.4cm}}

\newcommand{\tr}[1]{\stackrel{#1}{\rightarrow}}
\newcommand{\defeq}{\;\widehat{=}\;}
\newcommand{\st}{\mathrel{|}}
\newcommand{\true}{\mathit{true}}
\newcommand{\false}{\mathit{false}}

\newcommand{\tle}{\ensuremath{\mathsf{E}}}
\newcommand{\tla}{\ensuremath{\mathsf{A}}}
\newcommand{\tlf}{\ensuremath{\mathsf{F}}}
\newcommand{\tlg}{\ensuremath{\mathsf{G}}}

\hyphenation{dem-on-strate}

\setlength{\textfloatsep}{4mm}
\setlength{\floatsep}{4mm}

\pagestyle{plain}
\begin{document}
\title{Synthesizing Finite-state Protocols from Scenarios and Requirements\thanks{This is the working draft of a paper currently in submission. (February 10, 2014)}}
\author{Rajeev Alur\inst{1} \and
    Milo Martin\inst{1} \and
    Mukund Raghothaman\inst{1} \and
    Christos Stergiou\inst{1,2} \and
    Stavros Tripakis\inst{2,3} \and
    Abhishek Udupa\inst{1}}

\institute{University of Pennsylvania \and
   University of California, Berkeley \and Aalto University}

\maketitle

\begin{abstract}
Scenarios, or Message Sequence Charts, offer an intuitive way of
describing the desired behaviors of a distributed protocol.  In this
paper we propose a new way of specifying finite-state protocols using
scenarios: we show that it is possible to automatically derive a
distributed implementation from a set of scenarios augmented with a
set of safety and liveness requirements, provided the given scenarios
adequately \emph{cover} all the states of the desired
implementation. We first derive incomplete state machines from the
given scenarios, and then synthesis corresponds to completing the
transition relation of individual processes so that the global product
meets the specified requirements. This completion problem, in general,
has the same complexity, PSPACE, as the verification problem, but
unlike the verification problem, is NP-complete for a constant number
of processes. We present two algorithms for solving the completion
problem, one based on a heuristic search in the space of possible
completions and one based on OBDD-based symbolic fixpoint computation.
We evaluate the proposed methodology for protocol specification and
the effectiveness of the synthesis algorithms using the classical
alternating-bit protocol.
\end{abstract}
\section{Introduction}

% \blue{
% Some points that should be included somewhere:

% Existing work on synthesis:
% Reactive synthesis: Only temporal logic requirements are given
% Deriving state-machines from scenarios (e.g. Harel's work): Only scenarios are given
% Bounded synthesis: TL requirements + a bound on number of states for better tractability

% Emerging trend:
% Allow user to specify design in multiple ways: view synthesis as an integration problem
% Examples: sketching, TRANSIT \cite{udupa2013}

% What's different from TRANSIT:
% focus on *finite-state* protocol
% Allows scenarios, states do not have to be explicitly given
% Purely algorithmic solution: satisfaction of requirements integrated in synthesis loop

% What's conceptually new in proposed methodology:
% Traditionally, scenarios are used as examples of observable behavior, we use them to
% obtain syntactic skeleton (incomplete state machine)
% Suffices to give scenarios that provide adequate coverage of states (in classical
% software engineering sense).
% }

In formal verification, a system model is checked against correctness
requirements to find bugs.  Sustained research in improving
verification tools over the last few decades has resulted in powerful
heuristics for coping with the computational intractability of
problems such as Boolean satisfiability and search through the
state-space of concurrent processes.  The advances in these analysis
tools now offer an opportunity to develop new methodologies for system
design that allow a programmer to specify a system in more intuitive
ways.  In this paper, we focus on distributed protocols: the multitude
of behaviors arising due to asynchronous concurrency makes the design
of such protocols difficult, and the benefits of using model checkers
to debug such protocols have been clearly demonstrated.  Traditionally
a distributed protocol is described using communicating finite-state
machines (FSMs), and the goal of this paper is to develop a
methodology aimed at simplifying the task of specifying them.

A more intuitive way of specifying the desired behaviors of a protocol
is by \emph{scenarios}, where each scenario describes an expected
sequence of message exchanges among participating processes. Such
scenarios are used in textbooks and classrooms to explain the protocol
and can be specified using the intuitive visual notation of Message
Sequence Charts.  In fact, the MSC notation is standardized by
IEEE~\cite{IEEEMsc},
and it is supported by some system development environments as design
supplements. These observations raise the question: is it plausible to
ask the designer to provide enough scenarios so that the protocol
implementation can be automatically synthesized?  Although one cannot
expect a designer to provide scenarios that include all the possible
behaviors, our key insight is that even a \emph{representative} set
of scenarios \emph{covers} all the states of the desired
implementation.  From a scenario, (local) states of a process are
obtained from explicit state-labels that appear as annotations as well
as from the histories of events in which the process participates. If
we consider all the states and the input/output transitions out of
these states for a given process that appear in the given set of
scenarios, we obtain a \emph{skeleton} of the desired FSM
implementation of that process. The synthesis problem now corresponds
to \emph{completing} this skeleton by adding transitions. This
requires the synthesizer to infer, for instance, how to process a
particular input event in a particular state even when this
information is missing from the specified scenarios.  The more such
completions that the synthesizer can learn successfully, the lower
the burden on the designer to specify details of each and every case.
To rule out incorrect completions, we ask the designer to provide a
model of the environment and correctness requirements. Some
requirements such as absence of deadlocks can be generic to all the
protocols, whereas other requirements specific to the coordination
problem being solved by the protocol are given as finite-state
monitors for safety and liveness properties commonly used in model
checkers.

The synthesis problem then maps to the following \emph{protocol
completion} problem: given (1) a set of FSMs with incomplete
transition functions, (2) a model of the environment, and (3) a set of
safety/liveness requirements, find a completion of the FSMs so that
the composition satisfies all the requirements.  We show this problem,
similar to the model checking problem, to be \textsc{Pspace}-complete,
but, unlike the model checking problem, to be \textsc{NP}-hard for
just one process.  We focus on two approaches to solving this problem:
the first performs a search through the space of possible completions
with heuristics guiding the search order and the second uses
OBDD-based symbolic model checking to compute the set of correct
completions by encoding the unknown targets of transitions as rigid
variables.

To evaluate our methodology, we consider the Alternating Bit Protocol,
a classical solution to provide reliable transmission using unreliable
channels.  The canonical description of the protocol~\cite{kurose2009}
uses four scenarios to explain its behavior. It turns out that the
first scenario corresponding to the typical behavior contains a
representative of each local state of both the sender and receiver
processes. Our symbolic algorithm for protocol completion is able to
find the correct implementation from just one scenario, and thus,
automatically learn how to cope with message losses and message
duplications. We then vary the input, both in terms of the set of
scenarios and the set of correctness requirements, and study how it
affects the computational requirements and the ability to learn the
correct protocol for both the completion algorithms.

\subsection*{Related Work}

Our work builds on techniques and tools for model
checking~\cite{CGP00} and also on the rich literature for formal
modeling and verification of distributed protocols~\cite{Lynch}.

The problem of deriving finite-state implementations from formal
requirements specified, for instance, in temporal logic, is called
\emph{reactive synthesis}, and has been studied
extensively~\cite{RW89,PnueliRosner,BJPPS12}.  When the implementation
is required to be distributed, the problem is known to be
undecidable~\cite{PR90,TripakisIPL04,FS05,LamouchiThistle00}.  In \emph{bounded synthesis},
one fixes a bound on the number of states of the implementation, and
this allows algorithmic solutions to distributed
synthesis~\cite{FS13}.  Another approach uses \emph{genetic
programming}, combined with model checking, to search through protocol
implementations to find a correct one, and has been shown to be
effective in synthesizing protocols such as leader
election~\cite{KP08,KatzPeled09}.

Specifying a reactive system using example scenarios has also a long
tradition.  In particular, the problem of deriving an implementation
that exhibits at least the behaviors specified by a given set of
scenarios is well-studied (see, for instance,~\cite{AEY03,UchitelKramer03}).  A
particularly well-developed approach is \emph{behavioral
programming}~\cite{HMW12} that builds on the work on an extension of
message sequence charts, called \emph{live sequence
charts}~\cite{DammH01}, and has been shown to be effective for
specifying the behavior of a single controller reacting with its
environment.  More recently, scenarios --- in the form of ``flows''
--- have been used in the modular verification of cache coherence
protocols~\cite{Murali09}.

Our approach of using \emph{both} the scenarios and the requirements
in an integrated manner, and using scenarios to derive incomplete
state machines, offers a conceptually new methodology compared to the
existing work.  We are inspired by recent work on program
sketching~\cite{SRBE05} and on protocol
specification~\cite{Transit13}. Compared to
\textsc{Transit}~\cite{Transit13}, in this paper we limit ourselves to
finite-state protocols, but consider both safety and liveness
requirements, and provide a \emph{fully automatic} synthesis
procedure.

The protocol completion problem itself has conceptual similarities to
problems such as \emph{program repair} studied in the
literature~\cite{JGB05}, but differs in technical details.

% Murali et al's from intel
% work on specifying/verifying systems using flows.
% paper: http://www.cs.cmu.edu/~tmurali/pubs/fmcad09.pdf
% Also Sanjit's student was working with Emina on synthesis with flows.

\section{Methodology}
\label{sec_methodology}

% \blue{
% This section can be informal, and we can use ABP (and additionally VI) to explain the problem.
% User gives
% (1) Protocol skeleton: this includes I/O interfaces of automata
% (2) Environment model: Context in which processes are supposed to operate
% (3) Requirements: Safety monitors + Liveness monitors + Deadlock-free requirement
% (4) Scenarios: Example behaviors (where state labels can also appear)

% Explain what needs to be synthesized intuitively.
% }

We explain our methodology by illustrating it on an example, the
well-known Alternating Bit Protocol (ABP). The ABP protocol ensures
reliable message transmission over unreliable channels which can
duplicate or lose messages. As input to the synthesis tool the user
provides the following:
\begin{itemize}
\item The \emph{protocol skeleton}: this is a set of processes which
are to be synthesized, and for each process, the \emph{interface} of
that process, i.e., its inputs and outputs.
\item The \emph{environment}: this is a set of processes which are
known and fixed, that is, are not to be synthesized nor modified in
any way by the synthesizer. The environment processes interact with
the protocol processes and the product of all these processes forms a
\emph{closed} system, which can be model-checked against a formal
specification.
\item A \emph{specification}: this is a set of formal requirements.
These can be expressed in different ways, e.g., as temporal logic
formulas, safety or liveness (e.g., B\"uchi) monitors, or
``hardwired'' properties such as absence of deadlock.
\item A set of \emph{scenarios}: these are example behaviors of the
system.  In our framework, a scenario is a type of \emph{message
sequence chart} (MSC).
\end{itemize}

In the case of the ABP example, the above inputs are as follows.  The
overall system is shown in Figure~\ref{fig_abp}.  The protocol
skeleton consists of the two unknown processes \emph{ABP Sender} and
\emph{ABP Receiver}. Their interfaces are shown in the figure, e.g.,
ABP Sender has inputs $a_0'$, $a_1'$, and $timeout$ and outputs
$send$, $p_0$, and $p_1$.  The environment processes are: \emph{Forward
Channel} (FC) (from ABP Sender to ABP Receiver, duplicating and
lossy), \emph{Backward Channel} (BC) (from ABP Receiver to ABP Sender,
also duplicating and lossy), \emph{Timer} (sends $timeout$ messages to
ABP Sender), \emph{Safety Monitor}, and a set of \emph{Liveness
Monitors}.

As specification for ABP we will use the following requirements: (1)
deadlock-freedom, i.e., absence of reachable global deadlock states
(in the product system); (2) safety, captured by a safety monitor such
as the one depicted in Figure~\ref{fig_abp}; (3) B\"uchi liveness
monitors, that accept incorrect infinite executions in which either a
send message is not followed by a deliver, a deliver is not followed
by a send, or a send never appears, provided that the channels are
fair; as well as (4) a property that we call \emph{non-blockingness},
which informally requires that in every reachable global state, if a
process wants to send a message to another process, then the latter
must be able to receive it.  Non-blockingness allows to specify that
the system does not have local deadlocks, where a process is blocked
from making progress while the system as a whole is not deadlocked.

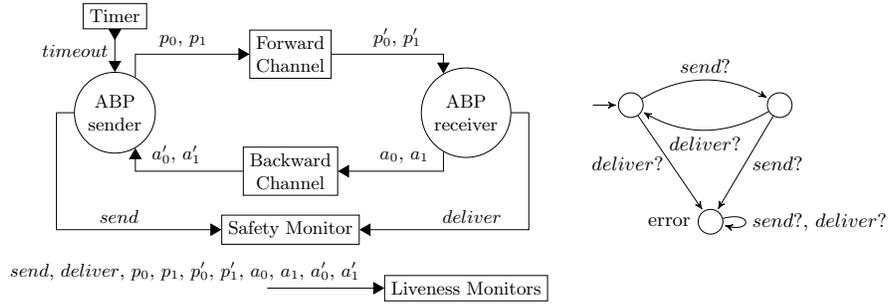
\begin{figure}[!t]
  \centering
  \begin{minipage}[t]{0.6\textwidth}
  \resizebox{1.0\textwidth}{!}{
    \begin{tikzpicture}
      [every node/.style={on grid, draw, align=center}, node distance=2.5cm,
      every edge/.style={>->,draw,>=triangle 60}]
      \node[circle] (abpsender) {ABP\\sender};
      \node[above right=1cm and 3cm of abpsender] (forwardchannel) {Forward\\Channel};
      \node[below right=1cm and 3cm of abpsender] (backwardchannel) {Backward\\Channel};
      \node[circle,below right=1cm and 3cm of forwardchannel] (abpreceiver) {ABP\\receiver};
      \node[below=1cm of backwardchannel] (safetymonitor) {Safety Monitor};
      \node[below right=2cm and 3cm of backwardchannel] (livenessmonitor) {Liveness Monitors};
      \node[above=1.65cm of abpsender] (timer) {Timer};
      \path (abpreceiver) edge[>=] ($(abpreceiver.east) + (0.3cm, 0)$)
      (timer) edge node[draw=none,left] {$timeout$} (abpsender)
      ;
      \tikzstyle{line} = [draw, -triangle 60]
      \begin{scope}[every node/.style={draw=none}]
        \path[line] ($(livenessmonitor.west) - (2cm, 0)$) -- (livenessmonitor);
        \node [above left=-.2cm and .3cm of livenessmonitor] {$send$, $deliver$, $p_0$, $p_1$, $p_0'$, $p_1'$, $a_0$, $a_1$, $a_0'$, $a_1'$};
        \path[line] (abpsender.60) |- (forwardchannel);
        \node[above left=-0.4cm and .6cm of forwardchannel] {$p_0$, $p_1$};
        \node[above right=-0.4cm and .6cm of forwardchannel] {$p_0'$, $p_1'$};
        \path[line] (forwardchannel) -| (abpreceiver.120);
        \path[line] (backwardchannel) -| (abpsender.-60);
        \path[line] (abpreceiver.240) |- (backwardchannel);
        \node[above left=-0.4cm and .6cm of backwardchannel] {$a_0'$, $a_1'$};
        \node[above right=-0.4cm and .6cm of backwardchannel] {$a_0$, $a_1$};
        \path[line] ($(abpsender.west) + (-.3cm,0)$) |- (safetymonitor);
        \path[draw] (abpsender) -- ($(abpsender.west) + (-.3cm,0)$);
        \path[line] ($(abpreceiver.east) + (0.3cm,0)$) |- (safetymonitor);
        \node[above right=-.25cm and 1.3cm of safetymonitor] {$deliver$};
        \node[above left=-.25cm and 1.3cm of safetymonitor] {$send$};
      \end{scope}
    \end{tikzpicture}}
  \end{minipage}\quad
  \begin{minipage}{0.35\textwidth}
    \centering
    \vspace{-40mm}
    \resizebox{1.0\textwidth}{!}{
      \begin{tikzpicture}[node distance=2cm,->,>=stealth',shorten >=1pt,align=center]
        \path
        node[state, initial, initial text={}] (s0) {}
        node[state, right=of s0] (s1) {}
        node[state, below right=1.6cm and 1.0cm of s0, label=left:{error}] (s2) {}
        (s0) edge[bend left] node[above] {$send?$} (s1)
        (s1) edge[bend left] node[below] {$deliver?$} (s0)
        (s0) edge node[left] {$deliver?$} (s2)
        (s1) edge node[right] {$send?$} (s2)
        (s2) edge[loop right] node[right] {$send?$, $deliver?$} (s2)
        ;
      \end{tikzpicture}}
  \end{minipage}
  \caption{ABP system architecture (left) and the safety monitor which
  ensures that send and deliver messages alternate (right)}
  \label{fig_abp}
\end{figure}

% \begin{figure}[H]
%   \caption{Safety monitor for ABP which enforces that send and deliver messages alternate.}
%   \label{fig_SafetyMonitorABP}
% \end{figure}

We will use the four message sequence charts
shown in Figure~\ref{fig_abp_scenarios} to describe the behavior of
the ABP protocol. They come from a textbook on computer
networking~\cite{kurose2009}. The first scenario describes the
behavior of the protocol when no packets or acknowledgments are lost
or duplicated. The second and the third scenarios correspond to packet
and acknowledgment loss respectively. Finally, the fourth scenario
describes the behavior of ABP on premature timeouts and/or packet
duplication.

\newcommand\msg[5]{
  \ifnumequal{#1}{0}{\def\source{\sender}}{\def\source{\receiver}}
  \ifnumequal{#1}{0}{\def\target{\receiver}}{\def\target{\sender}}
  \ifnumequal{#1}{0}{\def\sourcepos{left}}{\def\sourcepos{right}}
  \ifnumequal{#1}{0}{\def\targetpos{right}}{\def\targetpos{left}}
  \path (\source,#2) edge[message] node[messagenode] {#3} (\target,#2-.5);
  \path (\source,#2) node[align=right,\sourcepos=.2cm] {#4};
  \path (\target,#2-.5) node[\targetpos=.2cm,align=left] {#5};
}

\tikzstyle{lane} = [draw, loosely dotted, ->, >=latex, line width=1.3]
\tikzstyle{message} = [draw, ->, >=latex, line width=2]
\tikzstyle{messagenode} = [sloped, above, pos=.3]

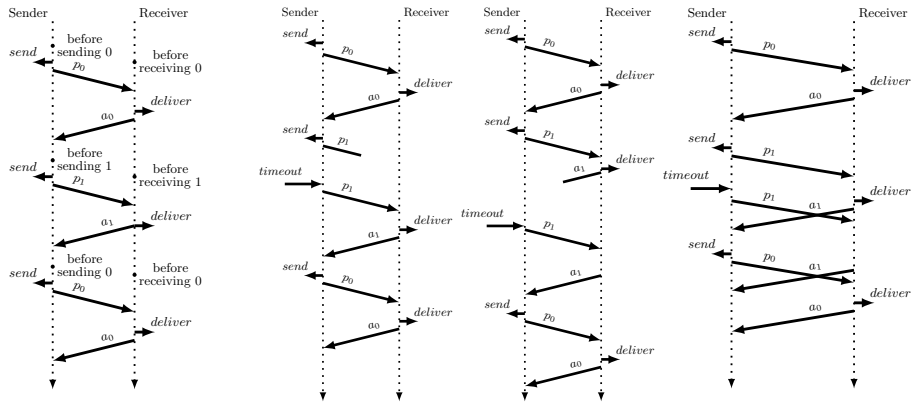
\begin{figure}[!t]
  \centering
  \begin{minipage}[t]{.22\textwidth}
    \vspace{0pt}
    \hspace{-40pt}
    \centering
    \resizebox{1.0\textwidth}{!}{\begin{tikzpicture}
      \def\lanelen{9}
      \def\sender{0}
      \def\receiver{2}
      \path
      (0,-.6) node[circle, fill, inner sep=1pt, minimum size=1pt] {}
      (0,-.6) node[anchor=west, align=center] {before\\[-2pt]sending 0}
      ;
      \path
      (0,-1) edge[message] node[above left] {$send$} (-.5,-1)
      (\sender,0) node[above left] {Sender}
      (\sender,0) edge[lane] (\sender,-\lanelen)
      (\receiver,0) node[above right] {Receiver}
      (\receiver,0) edge[lane] (\receiver,-\lanelen)
      (2,-1) node[circle, fill, inner sep=1pt, minimum size=1pt] {}
      (2,-1) node[anchor=west, align=center] {before\\[-2pt]receiving 0}
      (2,-6.2) node[circle, fill, inner sep=1pt, minimum size=1pt] {}
      (2,-6.2) node[anchor=west, align=center] {before\\[-2pt]receiving 0}
      ;
      \msg{0}{-1.2}{$p_0$}{}{}
      \path (2,-2.2) edge[message] node[above right] {$deliver$} (2.5,-2.2);
      \msg{1}{-2.4}{$a_0$}{}{}

      \path
      (0,-3.4) node[circle, fill, inner sep=1pt, minimum size=1pt] {}
      (0,-3.4) node[anchor=west, align=center] {before\\[-2pt]sending 1};

      \path (0,-3.8) edge[message] node[above left] {$send$}
      (-.5,-3.8)
      (2,-3.8) node[circle, fill, inner sep=1pt, minimum size=1pt] {}
      (2,-3.8) node[anchor=west, align=center] {before\\[-2pt]receiving 1}
      ;
      \msg{0}{-4}{$p_1$}{}{}
      \path (2,-5) edge[message] node[above right] {$deliver$} (2.5,-5);
      \msg{1}{-5}{$a_1$}{}{}
      \path (0,-6.4) edge[message] node[above left] {$send$} (-.5,-6.4);
      \msg{0}{-6.6}{$p_0$}{}{}
      \path (2,-7.6) edge[message] node[above right] {$deliver$} (2.5,-7.6);
      \msg{1}{-7.8}{$a_0$}{}{}
      \path
      (0,-6) node[circle, fill, inner sep=1pt, minimum size=1pt] {}
      (0,-6) node[anchor=west, align=center] {before\\[-2pt]sending 0}
      ;

    \end{tikzpicture}}
  \end{minipage}%
  \begin{minipage}[t]{.22\textwidth}
    \vspace{0pt}
    \centering
    \resizebox{1.0\textwidth}{!}{\begin{tikzpicture}
      \def\lanelen{10}
      \def\sender{0}
      \def\receiver{2}
      \path
      (0,-.6) edge[message] node[above left] {$send$} (-.5,-.6)
      (\sender,0) node[above left] {Sender}
      (\sender,0) edge[lane] (\sender,-\lanelen)
      (\receiver,0) node[above right] {Receiver}
      (\receiver,0) edge[lane] (\receiver,-\lanelen)
      ;
      \msg{0}{-.9}{$p_0$}{}{}
      \path (2,-1.9) edge[message] node[above right] {$deliver$} (2.5,-1.9);
      \msg{1}{-2.1}{$a_0$}{}{}

      \path (0,-3.1) edge[message] node[above left] {$send$} (-.5,-3.1);

      \path (\sender,-3.3) edge[message,-] node[messagenode,pos=.5] {$p_1$} (1,-3.3-.25);

      \path (-1,-4.3) edge[message] node[above left=.1cm and -.1cm] {$timeout$} (0,-4.3);
      \msg{0}{-4.5}{$p_1$}{}{}
      \path (2,-5.5) edge[message] node[above right] {$deliver$} (2.5,-5.5);
      \msg{1}{-5.7}{$a_1$}{}{}
      \path (0,-6.7) edge[message] node[above left] {$send$} (-.5,-6.7);
      \msg{0}{-6.9}{$p_0$}{}{}
      \path (2,-7.9) edge[message] node[above right] {$deliver$} (2.5,-7.9);
      \msg{1}{-8.1}{$a_0$}{}{}
    \end{tikzpicture}}
  \end{minipage}%
  \begin{minipage}[t]{.22\textwidth}
    \vspace{0pt}
    \resizebox{1.0\textwidth}{!}{\begin{tikzpicture}
      \def\lanelen{10}
      \def\sender{0}
      \def\receiver{2}
      \path
      (\sender,0) node[above left] {Sender}
      (\sender,0) edge[lane] (\sender,-\lanelen)
      (\receiver,0) node[above right] {Receiver}
      (\receiver,0) edge[lane] (\receiver,-\lanelen)
      ;
      \path (0,-.5) edge[message] node[above left] {$send$} (-.5,-.5);
      \msg{0}{-.7}{$p_0$}{}{}
      \path (2,-1.7) edge[message] node[above right] {$deliver$} (2.5,-1.7);
      \msg{1}{-1.9}{$a_0$}{}{}
      \path (0,-2.9) edge[message] node[above left] {$send$} (-.5,-2.9);
      \msg{0}{-3.1}{$p_1$}{}{}
      \path (\receiver,-4) node[right=.2cm] {};
      \path (\receiver,-4) edge[message,-] node[messagenode,pos=.5] {$a_1$} (1,-4-.25);
      \path (-1,-5.4) edge[message] node[above left=.1cm and -.1cm] {$timeout$} (0,-5.4);
      \msg{0}{-5.5}{$p_1$}{}{}
      \path (2,-3.9) edge[message] node[above right] {$deliver$} (2.5,-3.9);
      \msg{1}{-6.7}{$a_1$}{}{}
      \path (0,-7.7) edge[message] node[above left] {$send$} (-.5,-7.7);
      \msg{0}{-7.9}{$p_0$}{}{}
      \path (2,-8.9) edge[message] node[above right] {$deliver$} (2.5,-8.9);
      \msg{1}{-9.1}{$a_0$}{}{}
    \end{tikzpicture}}
  \end{minipage}%
  \begin{minipage}[t]{.28\textwidth}
    \vspace{0pt}
    \centering
    \resizebox{\textwidth}{!}{\begin{tikzpicture}
      \def\lanelen{9}
      \def\sender{0}
      \def\receiver{3}
      \path
      (\sender,0) node[above left] {Sender}
      (\sender,0) edge[lane] (\sender,-\lanelen)
      (\receiver,0) node[above right] {Receiver}
      (\receiver,0) edge[lane] (\receiver,-\lanelen)
      ;
      \path (0,-.5) edge[message] node[above left] {$send$} (-.5,-.5);
      \msg{0}{-.7}{$p_0$}{}{}
      \path (3,-1.7) edge[message] node[above right] {$deliver$} (3.5,-1.7);
      \msg{1}{-1.9}{$a_0$}{}{}
      \path (0,-3.1) edge[message] node[above left] {$send$} (-.5,-3.1);
      \msg{0}{-3.3}{$p_1$}{}{}
      \path (3,-4.4) edge[message] node[above right] {$deliver$} (3.5,-4.4);
      \msg{1}{-4.6}{$a_1$}{}{}
      \path (0,-5.7) edge[message] node[above left] {$send$} (-.5,-5.7);
      \msg{0}{-5.9}{$p_0$}{}{}
      \path (3,-6.9) edge[message] node[above right] {$deliver$} (3.5,-6.9);
      \msg{1}{-7.1}{$a_0$}{}{}

      \path (-1,-4.1) edge[message] node[above left=.1cm and -.1cm] {$timeout$} (0,-4.1);
      \msg{0}{-4.4}{$p_1$}{}{}
      \msg{1}{-6.1}{$a_1$}{}{}
    \end{tikzpicture}}
  \end{minipage}
  \caption{\label{fig_abp_scenarios}
Four scenarios for the alternating-bit protocol. 
From left to right:
No loss,
Lost packet,
Lost ACK,
Premature timeout/duplication.}
\end{figure}

A candidate solution to the ABP synthesis problem is a pair of
processes, one for the ABP Sender and one for the ABP Receiver. Such a
candidate is a valid solution if: (a) the two processes respect their
I/O interface and satisfy some additional requirements such as determinism
(these are defined formally in Section~\ref{subsec_fs_io_automata}),
(b) the overall ABP system (product of all processes) may exhibit each
of the input scenarios, and (c) it satisfies the correctness
requirements.

The output of the BDD-based algorithm when run with the requirements
mentioned above and only the first scenario from
Figure~\ref{fig_abp_scenarios} is shown in
Figure~\ref{fig_abp_computed_solution}.  It can be checked that these
solutions are ``similar/equivalent'' in the sense that they satisfy
the same intuitive properties that one expects from the ABP protocol.
In particular, the computed solution in
Figure~\ref{fig_abp_computed_solution} eagerly retransmits the
appropriate packet when an unexpected acknowledgment is
received. This
behavior might incur additional traffic but satisfies all the safety
and liveness properties for the ABP protocol. The computed solution
for the ABP receiver is the same as the manually constructed automaton
shown in Figure~\ref{fig_abp_ideal_solution}.

\begin{figure}[!t]
  \centering
  \begin{minipage}{.4\textwidth}
    \centering
    \resizebox{\textwidth}{!}{
    \begin{tikzpicture}[node distance=2cm,->,>=stealth',shorten >=1pt, align=center]
      \node[state,initial,initial text={}] (s0) {};
      \node[state] (s1) [above right=.7cm and 1cm of s0] {};
      \node[state] (s2) [right=of s1] {};
      \node[state] (s3) [below right=.7cm and 1cm of s2] {};
      \node[state] (s4) [below left=.7cm and 1cm of s3] {};
      \node[state] (s5) [left=of s4] {};
      \path
      (s0) edge node[above left] {$send!$} (s1)
      (s1) edge[bend left] node[above] {$p_0!$} (s2)
      (s2) edge[bend left] node[below] {$timeout?$} (s1)
      (s2) edge[loop above] node[above] {$a_1'?$} (s2)
      (s2) edge node[above right] {$a_0'?$} (s3)
      (s3) edge node[below right] {$send!$} (s4)
      (s4) edge[bend right] node[above] {$p_1!$} (s5)
      (s5) edge[bend right] node[below] {$timeout?$} (s4)
      (s5) edge[loop above] node[above] {$a_0'?$} (s5)
      (s5) edge node[below left] {$a_1'?$} (s0)
      ;
    \end{tikzpicture}}
  \end{minipage}\qquad
  \begin{minipage}{.35\textwidth}
    \centering
    \resizebox{\textwidth}{!}{
    \begin{tikzpicture}[node distance=3cm,->,>=stealth',shorten >=1pt,align=center]
      \node[state,initial,initial text={}] (r0) {};
      \node[state, above right=.7cm and 1cm of r0] (r1) {};
      \node[state, right=1.4cm of r1] (r2) {};
      \node[state, below right=.7cm and 1cm of r2] (r3) {};
      \node[state, below left=.7cm and 1cm of r3] (r4) {};
      \node[state, left=1.4cm of r4] (r5) {};
      \path
      (r0) edge node[above left] {$p_0'?$} (r1)
      (r1) edge node[above] {$deliver!$} (r2)
      (r2) edge[bend left] node[above right] {$a_0!$} (r3)
      (r3) edge[bend left] node[below left] {$p_0'?$} (r2)
      (r3) edge node[below right] {$p_1'?$} (r4)
      (r4) edge node[above] {$deliver!$} (r5)
      (r5) edge[bend right] node[above right] {$a_1!$} (r0)
      (r0) edge[bend right] node[below left] {$p_1'?$} (r5)
      ;
    \end{tikzpicture}}
  \end{minipage}
  \caption{\label{fig_abp_ideal_solution} ABP ``manual'' solution: ABP Sender (left), ABP Receiver (right).}
\end{figure}
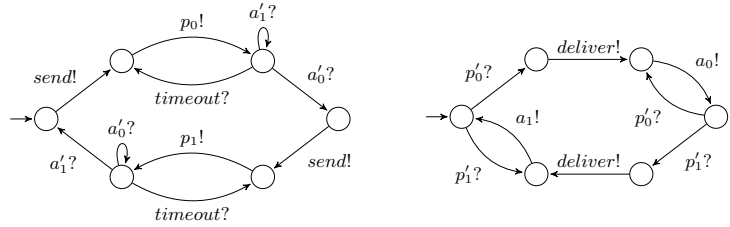
\begin{figure}[ht]
  \centering
  \resizebox{0.4\textwidth}{!}{
    \begin{tikzpicture}[node distance=2cm,->,>=stealth',shorten >=1pt, align=center]
      \node[state,initial,initial text={}] (s0) {};
      \node[state] (s1) [above right=.7cm and 1cm of s0] {};
      \node[state] (s2) [right=of s1] {};
      \node[state] (s3) [below right=.7cm and 1cm of s2] {};
      \node[state] (s4) [below left=.7cm and 1cm of s3] {};
      \node[state] (s5) [left=of s4] {};
      \path
      (s0) edge node[above left] {$send!$} (s1)
      (s1) edge[out=70,in=110,looseness=0.7] node[above] {$p_0!$} (s2)
      (s2) edge[bend left] node[below] {$timeout?$} (s1)
      (s2) edge node[above] {$a_1'?$} (s1)
      (s2) edge node[above right] {$a_0'?$} (s3)
      (s3) edge node[below right] {$send!$} (s4)
      (s4) edge[out=110,in=70,looseness=0.7] node[above] {$p_1!$} (s5)
      (s5) edge[bend right] node[below] {$timeout?$} (s4)
      (s5) edge node[above] {$a_0'?$} (s4)
      (s5) edge node[below left] {$a_1'?$} (s0)
      ;
    \end{tikzpicture}}
  \caption{\label{fig_abp_computed_solution} Solution computed for ABP
    Sender by the BDD-based symbolic algorithm using only the first
    scenario.}
\end{figure}
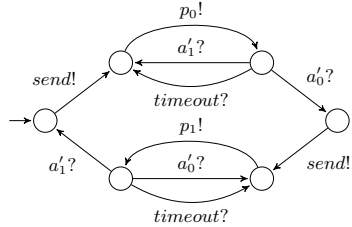

\section{The Automata Completion Problem}
\label{sec_formalization}
We now describe how the problem, which we have set up in
Section~\ref{sec_methodology}, can be viewed as a problem of
completing the transition relations of finite IO automata.

% \blue{
% Basically what Christos already wrote.
% It may be ok to explain "from scenarios to incomplete automata" only via example
% (i.e. by drawing incomplete automata for ABP for scenarios in 2).
% }

\subsection{Finite-state Input-Output Automata}
\label{subsec_fs_io_automata}

A {\em finite-state input-output automaton} is a tuple $A =
(Q,q_0,I,O,T)$ where $Q$ is a finite set of states, $q_0\in Q$ is the
initial state, $I$ is a finite (possibly empty) set of inputs, $O$ is
a finite (possibly empty) set of outputs, with $I \cap O=\emptyset$
and $T\subseteq Q\times (I\cup O) \times Q$ is a finite set of
transitions\footnote{The framework and synthesis algorithms can easily
  be extended to handle internal transitions as well, but we suppress
  this detail for simplicity of presentation.}.

We write a  transition $(q,x,q')\in T$ as $q\tr{x?}q'$  when $x\in I$,
and  as $q\tr{x!}q'$  when $x\in  O$.   We write  $q\tr{}q'$ if  there
exists $x$ such that $(q,x,q')\in T$.  A transition labeled with $x\in
I$  (respectively,  $x\in O$)  is  called  an {\em  input  transition}
(respectively, an {\em output transition}).

A state $q\in Q$ is called a {\em deadlock} if it has no outgoing transitions.
$q$ is called an {\em input state} if it
has at least one outgoing transition, and
all outgoing transitions from $q$ are input transitions.
$q$ is called an {\em output state} if it
has a single outgoing transition, which is an output transition.

Automaton $A$ is called {\em deterministic} if 
for every state $q\in Q$, if there are multiple outgoing
transitions from $q$, then all these transitions must be labeled with 
distinct inputs.
Determinism implies that every state $q\in Q$ is a deadlock, an
input state, or an output state.

% Automaton $A$ is called {\em receptive} if for every state $q\in Q$, 
% and every input $x\in I$, there is a transition $q\tr{x?}q'$.

Automaton $A$ is called {\em closed} if $I=\emptyset$.

% \subsection{Safety and Liveness Monitors}

A {\em safety monitor} is an automaton equipped with a set of {\em error
states} $Q_e$, $A = (Q,q_0,I,O,T,Q_e)$.
A {\em liveness monitor} is an automaton equipped with a set of {\em accepting
states} $Q_a$, $A = (Q,q_0,I,O,T,Q_a)$.
A monitor could be both safety and liveness, in which case it is a tuple
$A = (Q,q_0,I,O,T,Q_e,Q_a)$.

A {\em run} of an automaton $A$ is a finite or infinite sequence of transitions
starting from the initial state: $q_0 \tr{} q_1 \tr{} q_2 \tr{} \cdots$.
A state $q$ is called {\em reachable} if there exists a finite run reaching
that state: $q_0 \tr{} q_1 \tr{} \cdots \tr{} q$.
A safety automaton is called {\em safe} if it has no reachable error states.
An infinite run of a liveness automaton is called {\em accepting} if it visits
accepting states infinitely often.
A liveness automaton is called {\em empty} if it has no infinite accepting runs.

\subsection{Composition}
\label{subsec_composition}

We define an asynchronous (interleaving-based)
parallel composition operator with rendezvous
synchronization.
Given two automata
$A_1 = (Q_1,q_{0,1},I_1,O_1,T_1)$
and
$A_2 = (Q_2,q_{0,2},I_2,O_2,T_2)$,
the composition of $A_1$ and $A_2$, denoted $A_1 \| A_2$,
is defined, provided $O_1\cap O_2=\emptyset$, as the automaton
$$
A_1 \| A_2 \defeq
	(Q_1\times Q_2, (q_{0,1},q_{0,2}),
		(I_1\cup I_2)\setminus (O_1\cup O_2), O_1\cup O_2, T)
$$
where $((q_1,q_2),x,(q_1',q_2'))\in T$ iff one of the following holds:
\begin{itemize}
\item $x\in O_1$ and $q_1\tr{x!}q_1'$ and either $x\in I_2$ and
  $q_2\tr{x?}q_2'$ or $x\not\in I_2$ and $q_2'=q_2$.
\item $x\in O_2$ and $q_2\tr{x!}q_2'$ and either $x\in I_1$ and
  $q_1\tr{x?}q_1'$ or $x\not\in I_1$ and $q_1'=q_1$.
\item $x\in (I_1\cup I_2)\setminus (O_1\cup O_2)$ and at least one of
  the following holds: (1) $x\in I_1\setminus I_2$ and
  $q_1\tr{x?}q_1'$ and $q_2'=q_2$, (2) $x\in I_2\setminus I_1$ and
  $q_2\tr{x?}q_2'$ and $q_1'=q_1$, (3) $x\in I_1\cap I_2$ and
  $q_1\tr{x?}q_1'$ and $q_2\tr{x?}q_2'$.
\end{itemize}

During composition, the product automaton $A_1 \| A_2$ ``inherits'' the
safety and liveness properties of each of its components.
Specifically, a product state $(q_1,q_2)$ is an error state if either $q_1$
or $q_2$ are error states.
A product state $(q_1,q_2)$ is an accepting state if either $q_1$
or $q_2$ is an accepting state.

% \begin{proposition}
% %\red{proposition?}
% $\|$ is commutative and associative.
% \end{proposition}

Note that $\|$ is commutative and associative. So we can write $A_1
\| A_2 \|\cdots \| A_n$ without parentheses, for a set of $n$
automata.

We call a product $A_1 \| \cdots \| A_n$ \emph{strongly non-blocking}
if in each reachable global state $(s_1, ..., s_n)$, if some automaton
$A_i$ is willing to send a message $x$, and all automata $A_j$ which
accept $x$ are in non-output states, then these automata should be
able to synchronize on the transition $x$ in that global state. On the
other hand, we call a product \emph{weakly} non-blocking if in each
reachable global state $(s_1, ..., s_n)$, if some automaton $A_i$ is
willing to send a message $x$, then it is possible for all automata
which accept $x$ to eventually synchronize on the transition $x$. Note
that strong non-blockingness is a safety property and can be useful
when the model-checker cannot verify liveness properties.

% Such a product $A_1 \| \cdots \| A_n$ is called {\em
% non-blocking} if for every global state $(s_1, ..., s_n)$, for every
% $i \in [1,n]$, and for each $x\in O_i$, if $s_i \tr{x!} s_i'$ is a
% possible (local) transition of $A_i$ and for all $j$ such that $x
% \in I_j$, $s_j$ is not an output state, then $(s_1, ...,s_n) \tr{x!}
% (s_1',...,s_n')$ is a possible (global) transition of $A_1 \| \cdots
% \| A_n$. That is, $A_1 \| \cdots \| A_n$ is non-blocking if in every
% reachable global state, if a sender automaton wants to communicate,
% then all the corresponding receiver processes are willing to
% synchronize, provided that all of them are in non-output states.

\subsection{From Scenarios to Incomplete Automata}
\label{subsec_scenarios_to_automata}

The first step in our synthesis method is to automatically generate
from the set of input scenarios an {\em incomplete automaton} for
each protocol process. The second step is then to complete these
incomplete automata to derive a complete protocol. In the sections
that follow, we formalize and study the automata completion problem.
In this section, we illustrate the first step of going from scenarios
to incomplete automata, by means of the ABP example.

The idea for transforming scenarios into incomplete automata is
simple.  First, for every ``swim lane'' in the message sequence chart
corresponding to a given scenario, we identify the corresponding
automaton in the overall system. For example, in each scenario shown
in Figure~\ref{fig_abp_scenarios}, the left-most lane corresponds to
ABP Sender and the right-most lane to ABP Receiver.  These scenarios
omit the environment processes for simplicity.  In particular channel
processes are omitted, however, we will use a primed version of a
message when referencing it on the process that receives it.

Second, for every protocol process $P$, we generate an incomplete automaton
$A_P$ as follows.
For every message {\em history} $\rho$
(i.e., finite sequence of messages received or sent by the process) 
specified in some scenario in the lane for $P$, we identify a state
$s_\rho$ in $A_P$.
If $\rho' = \rho \cdot x$ is an extension of history $\rho$ by one
message $x$, then there is a transition $s_\rho \tr{x} s_{\rho'}$ in $A_P$.
Applying this procedure to the four scenarios of Figure~\ref{fig_abp_scenarios},
we obtain the two incomplete automata shown in Figure~\ref{fig_abp_pta}.


\begin{figure}[!t]
  \begin{minipage}{\textwidth}
  \centering
  \resizebox{0.95\textwidth}{!}{
  \begin{tikzpicture}[node distance=1.5cm,->,>=stealth',shorten >=1pt,align=center]
    \node[state,initial by arrow,initial text=] (s0) {};
    \node[state] (s05) [right of=s0] {};
    \node[state] (s1) [right of=s05] {};
    \node[state] (s2) [right of=s1] {};
    \node[state] (s25) [right of=s2] {};
    \node[state] (s3) [right of=s25] {};
    \node[state] (s4) [right of=s3] {};
    \node[state] (s45) [right of=s4] {};
    \node[state] (s5) [right of=s45] {};
    \node[state] (s6) [right of=s5] {};
    \node[state] (s7) [below of=s3] {};
%    \node[state] (s75) [right of=s7] {};
    \node[state] (s8) [right of=s7] {};
    \node[state] (s9) [right of=s8] {};
    \node[state] (s95) [right of=s9] {};
    \node[state] (s10) [right of=s95] {};
    \node[state] (s11) [right of=s10] {};
    \node[state] (s12) [below of=s10] {};
    \node[state] (s13) [right of=s12] {};
    \path (s0) edge node[above] {$send!$} (s05)
          (s05) edge node[above] {$p_0!$} (s1)
          (s1) edge node[above] {$a_0'?$} (s2)
          (s2) edge node[above] {$send!$} (s25)
          (s25) edge node[above] {$p_1!$}  (s3)
          (s3) edge node[above] {$a_1'?$}  (s4)
          (s4) edge node[above] {$send!$} (s45)
          (s45) edge node[above] {$p_0!$} (s5)
          (s5) edge node[above] {$a_0'?$} (s6)
          (s3) edge node[left]  {$timeout?$} (s7)
%          (s7) edge node[above] {$send!$} (s75)
          (s7) edge node[above] {$p_1!$}  (s8)
          (s8) edge node[above] {$a_1'?$}  (s9)
          (s9) edge node[above] {$send!$} (s95)
          (s95) edge node[above] {$p_0!$} (s10)
          (s10) edge node[above] {$a_0'?$} (s11)
          (s10) edge node[left] {$a_1'?$} (s12)
          (s12) edge node[above] {$a_0'?$} (s13)
          ;
  \end{tikzpicture}}
  \end{minipage}

  \begin{minipage}{\textwidth}
  \centering
  \resizebox{1.0\textwidth}{!}{
  \begin{tikzpicture}[node distance=1.6cm,->,>=stealth',shorten >=1pt, align=center]
    \node[state,initial by arrow,initial text=] (r0) {};
    \node[state] (r1) [right of=r0] {};
    \node[state] (r15) [right of=r1] {};
    \node[state] (r2) [right of=r15] {};
    \node[state] (r3) [right of=r2] {};
    \node[state] (r35) [right of=r3] {};
    \node[state] (r4) [right of=r35] {};
    \node[state] (r5) [right of=r4] {};
    \node[state] (r55) [right of=r5] {};
    \node[state] (r6) [right of=r55] {};
    \node[state] (r7) [below of=r4] {};
%    \node[state] (r75) [right of=r7] {};
    \node[state] (r8) [right of=r7] {};
    \node[state] (r9) [right of=r8] {};
    \node[state] (r95) [right of=r9] {};
    \node[state] (r10) [right of=r95] {};
    \path (r0) edge node[above] {$p_0'?$} (r1)
          (r1) edge node[above] {$deliver!$} (r15)
          (r15) edge node[above] {$a_0!$} (r2)
          (r2) edge node[above] {$p_1'?$}  (r3)
          (r3) edge node[above] {$deliver!$} (r35)
          (r35) edge node[above] {$a_1!$}  (r4)
          (r4) edge node[above] {$p_0'?$} (r5)
          (r5) edge node[above] {$deliver!$} (r55)
          (r55) edge node[above] {$a_0!$} (r6)
          (r4) edge node[left]  {$p_1'?$}  (r7)
%          (r7) edge node[above] {$deliver!$} (r75)
          (r7) edge node[above] {$a_1!$}  (r8)
          (r8) edge node[above] {$p_0'?$} (r9)
          (r9) edge node[above] {$deliver!$} (r95)
          (r95) edge node[above] {$a_0!$} (r10)
    ;
  \end{tikzpicture}}
  \end{minipage}
  \caption{\label{fig_abp_pta} Incomplete protocol automata for ABP: Sender (top), Receiver (bottom).}
\end{figure}

Third, scenarios are annotated with labels. As shown in the first
scenario of Figure~\ref{fig_abp_scenarios}, labels appear between
messages on swim lanes. These are used to merge the states that
correspond to message histories that are followed by the same label.
Merging occurs for states of a single scenario as well as across
multiple ones if the same label is used in different scenarios.
If consistent labels are given to the initial and final positions
in all swim lanes of the scenarios the resulting incomplete automata
can be made cyclic. Furthermore, labels are essential for
specifying recurring behaviors in scenarios and the structure of the
incomplete automaton depends on the number and positions of labels used.

Finally, it is often the case that different behaviors of a system are
equivalent up to simple replacement of messages. For example, all the
ABP scenarios express valid behaviors if $p_0$ and $a_0$ messages are
consistently replaced with $p_1$ and $a_1$ messages respectively and
vice versa.  Thus, our framework allows for scenarios to be
characterized as ``symmetric''.


\begin{figure}[!t]
  \centering
  \resizebox{0.75\textwidth}{!}{
    \begin{tikzpicture}[node distance=1.5cm,->,>=stealth',shorten >=1pt, align=center,
      every node/.style={on grid}]
      \node[state,initial,initial text={}] (s0) {};
      \node[state] (s1) [right=1.6cm of s0] {};
      \node[state, inner sep=2pt] (s2) [right=of s1] {$q_1$};
      \node[state, inner sep=2pt] (s3) [right=1.8cm of s2] {$q_2$};
      \node[state, inner sep=2pt] (s4) [right=1.8cm of s3] {$q_5$};
      \node[state, inner sep=2pt] (s5) [below right=of s2] {$q_3$};
      \node[state, inner sep=2pt] (s6) [right=of s5] {$q_4$};
      \node[state, right=of s4] (s7) {};
%      \node[draw=none, right=1.3cm of s7] (s8) {$\cdots$};
      \node[draw=none, below right=1.4cm of s0, rotate=-45] {before sending 0};
      \node[draw=none, below right=1.5cm of s4, rotate=-45] {before sending 1};
      \path
      (s0) edge node[above] {$send!$} (s1)
      (s1) edge node[above] {$p_0!$} (s2)
      (s2) edge node[above] {$a_1'?$} (s3)
      (s3) edge node[above] {$a_0'?$} (s4)
      (s2) edge[out=90, in=90, looseness=.8] node[above] {$a_0'?$} (s4)
      (s2) edge node[below left] {$timeout?$} (s5)
      (s6) edge node[below right] {$a_0'?$} (s4)
      (s5) edge node[above] {$p_0!$} (s6)
      (s4) edge node[above] {$send!$} (s7)
%      (s7) edge (s8)
      ;
    \end{tikzpicture}}
  \caption{Incomplete protocol automaton for ABP Sender after adding
    symmetric scenarios and merging labeled states. (Only the first
    half of the automaton is shown, the rest is the symmetric case for
    packet 1.)}
  \label{fig_final_incomplete_abp_sender}
\end{figure}

We annotate the swim lanes of the ABP Sender scenarios of
Figure~\ref{fig_abp_scenarios} with ``before sending 0'' and ``before
sending 1'' labels, and the swim lanes of the ABP Receiver with
``before receiving 0'' and ``before receiving 1'' labels.
We also add the symmetric scenarios by switching 0 messages with 1 messages.
The resulting incomplete automaton for ABP Sender is shown in
Figure~\ref{fig_final_incomplete_abp_sender}.

% \red{note this:}
% We not here that our framework allows the user to provide no input scenarios,
% and instead provide the set of states for each protocol automaton. Then the
% synthesizer must complete the automata by adding the transitions (but without
% adding any more states than those already provided by the user).
% Our framework also allows ``hybrid'' inputs, where the user provides a set
% of input scenarios, and in addition to those, a number of {\em extra states}
% to be used in addition to the states generated from the scenarios.
% We will discuss why this option is useful in Section~\ref{secwhat}.

\subsection{Automata Completion}

Having transformed the input scenarios into incomplete automata, the next
step is to {\em complete} those automata by adding the appropriate transitions,
so as to synthesize a complete and correct protocol. In this section we
formalize this completion problem. We define two versions of the problem:
a general version (Problem~\ref{probgeneral}) and a special version
with only a single incomplete automaton (Problem~\ref{probsimple}).
We will use Problem~\ref{probsimple} in Section~\ref{sec_complexity} to show
that even in the simplest case automaton completion is combinatorially hard.

Consider an automaton $A = (Q,q_0,I,O,T)$. 
Given a set of transitions $T' \subseteq Q\times (I\cup O)\times Q$,
the {\em completion of $A$ with $T'$} is the new automaton
$A' = (Q,q_0,I,O,T\cup T')$. 
% \red{require $T'$ disjoint from $T$?}

% \red{I don't see any mention of the monitors in the following problems. Are some
% of the processes monitors? If so, shouldn't we make this explicit?}

\begin{problem}
\label{probsimple}
Given automaton $E$ (the {\em environment}) and deterministic automaton
$P$ (the {\em process}) such that $E \| P$ is defined,
find a set of transitions $T$ such that,
if $P'$ is the completion of $P$ with $T$, then $P'$ is
deterministic and $E \| P'$ has no reachable deadlock states.
\end{problem}
Note that if $E \| P$ is defined then $E \| P'$ is also defined,
because, by definition, completion does not modify the interface (sets
of inputs and outputs) of an automaton.

% \begin{problem}[Generalization of Problem~\ref{probsimple}]
\begin{problem}
\label{probgeneral}
Given a set of {\em environment automata} $E_1,...,E_m$ and set
of deterministic {\em process automata} $P_1,...,P_n$ such that
$E_1\|\cdots \| E_m \| P_1\|\cdots\| P_n$ is defined,
find sets of transitions $T_1,...,T_n$ such that,
if $P_i'$ is the completion of $P_i$ with $T_i$, for $i=1,...,n$, then
\begin{itemize}
\item $P_i'$ is deterministic, for $i=1,...,n$,
\item if the product automaton $\Pi := E_1\|\cdots \| E_m \|
  P_1'\|\cdots\| P_n'$ is a safety automaton then it is safe,
\item if $\Pi$ is a liveness automaton then it is empty,
\item $\Pi$ has no reachable 
	deadlock states, 
\item and, optionally, $\Pi$ is weakly (or strongly) non-blocking.
\end{itemize}
\end{problem}

Some of the environment processes $E_i$ can be be safety or liveness
monitors.  The last requirement means that Problem~\ref{probgeneral}
comes in three versions, one where strong non-blockingness is
required, one where the weak version is required, and a third where
none is. These are options provided by the user.
% \red{note this:}
% As we shall see in Section~\ref{secwhat}, for some examples non-blockingness
% is a useful requirement to add in order to get a solution.
% In other cases, it is not necessary, and in fact it might be the case
% that the manually constructed solution is blocking.

\section{Solving Automata Completion}

In this section, we consider procedures to solve the automata
completion problem.  First, we show that Problems~\ref{probsimple} and
\ref{probgeneral} are NP-complete and PSPACE-complete
respectively. Then, we present an explicit search algorithm that
eagerly prunes parts of the search space and a heuristic which ranks
candidate completions. Finally, we describe an algorithm which reduces
automata completion to a model-checking problem for a symbolic
model-checker.

\subsection{Complexity}
\label{sec_complexity}

It can be shown that Problem~\ref{probgeneral} is
PSPACE-complete. Note that this is not surprising, as the verification
problem itself is PSPACE-complete, for safety properties of
distributed protocols. However, in the special case of two processes,
while verification can be performed in polynomial time, a reduction
from 3-SAT shows that the corresponding completion
Problem~\ref{probsimple} is NP-complete. The proofs can be found in
the appendix.

\begin{theorem}\label{thm:complexity}
Problem~\ref{probgeneral} is PSPACE-complete and
Problem~\ref{probsimple} is NP-complete.
\end{theorem}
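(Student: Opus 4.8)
The plan is to establish four things: membership of Problem~\ref{probgeneral} in PSPACE, PSPACE-hardness of Problem~\ref{probgeneral}, membership of Problem~\ref{probsimple} in NP, and NP-hardness of Problem~\ref{probsimple}.

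For PSPACE membership of Problem~\ref{probgeneral}, the key observation is that although a completion $T_i$ can be guessed nondeterministically, its size is polynomial (it is a subset of $Q_i \times (I_i \cup O_i) \times Q_i$), so the real issue is \emph{verifying} a candidate completion. Given the chosen $T_i$'s, checking determinism of each $P_i'$ is trivially in polynomial time; checking safety, emptiness of the liveness automaton, absence of reachable deadlocks, and (strong or weak) non-blockingness are all reachability/repeated-reachability questions on the product $\Pi$, whose state space is of exponential size but whose states have polynomial description. Each of these checks is therefore in PSPACE by the standard on-the-fly graph search (and NLOGSPACE in the size of $\Pi$, hence PSPACE in the input). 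Since NPSPACE $=$ PSPACE by Savitch, guessing the completion and then verifying it in PSPACE places the whole problem in PSPACE. One subtlety to handle carefully is the weak non-blockingness requirement, which is a liveness-style ``eventually synchronize'' property; I would phrase it as a reachability property in a suitable product with the monitors so that it too is checkable within PSPACE.

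For PSPACE-hardness of Problem~\ref{probgeneral}, I would reduce from the reachability (or non-reachability) problem for a network of communicating finite-state machines, or equivalently from a linear-space-bounded Turing machine acceptance problem, which is the standard source of PSPACE-hardness for distributed-protocol verification. The idea is to take all $P_i$ to be \emph{already complete} (so the completion sets $T_i$ are forced to be empty, or the $P_i$ have no states at which a transition could legally be added), making the completion problem degenerate into exactly the verification problem for $\Pi$; then a deadlock or safety-monitor error state is reachable iff the simulated machine accepts. Since verification of such products is PSPACE-hard, so is Problem~\ref{probgeneral}. Dually, Problem~\ref{probsimple} lies in NP: the process $P$ is a single automaton, so a completion $T$ has polynomial size and can be guessed; checking that $P'$ is deterministic is polynomial, and because the number of processes is \emph{fixed} (namely two, $E$ and $P'$), the product $E \| P'$ has size polynomial in the input, so deadlock-reachability in it is decidable in polynomial time. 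Guess-then-verify-in-P gives NP.

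The main obstacle — and the most interesting part — is the NP-hardness of Problem~\ref{probsimple}, which the introduction says goes via a reduction from 3-SAT. The challenge is that we have only \emph{one} incomplete process, so all the combinatorial structure of the SAT instance must be encoded into the choices of targets for the missing transitions of $P$, with the environment $E$ orchestrating which choices get ``tested.'' I would build $E$ so that it non-deterministically (over the course of a run) walks $P$ through gadgets, one per clause, where at the state reached after feeding in the literals of a clause, $P$ deadlocks unless at least one of its transition-target choices was set consistently with a satisfying assignment; the per-variable consistency (that a variable is assigned the same truth value everywhere) is enforced by reusing the same states of $P$ across clauses, so that the completion $T$ must commit to a single target per variable-state. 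The determinism requirement on $P'$ is what forces each variable to get a \emph{single} truth value, turning the completion into a genuine assignment. I would then argue both directions: a satisfying assignment yields a deadlock-free completion, and any deadlock-free completion must, by the clause gadgets, induce a satisfying assignment. Getting the gadget interface (inputs/outputs of $E$ and $P$, and which transitions of $P$ are present vs. missing) exactly right so that the only freedom is the ``assignment'' and the only failure mode is ``an unsatisfied clause causes a reachable deadlock'' is the delicate engineering step.
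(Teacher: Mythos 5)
Your proposal is correct and follows essentially the same route as the paper on all four components: NPSPACE-guess-and-verify with Savitch for PSPACE membership, degenerating the completion problem into pure verification of the environment for PSPACE-hardness, the polynomial-size two-automaton product for NP membership, and a 3-SAT reduction in which the single process has a pair of states per variable, determinism forces one transition target (hence one truth value) per variable, and the environment sequentially challenges the literals of each clause and drives the product into a deadlock exactly when a clause is falsified. The only thing you leave unfinished is the explicit clause/variable gadget construction, but the architecture you describe is precisely the one the paper instantiates.
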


\subsection{Explicit search}
\label{sec_search}

This algorithm for solving the automata completion problem
(Problem~\ref{probgeneral}) is based on an explicit search over the
space of possible completions, guided by various heuristics.  More
specifically, the algorithm explores a {\em search tree} in which every
node is a set of added transitions $T$ (we include in $T$ the
transitions added in all incomplete protocol automata). The children
of each node $T$ are those nodes $T^{\prime} \supset T$ which contain
exactly one more transition than $T$.  The root of the tree is the
empty set of transitions, which corresponds to the original input
(i.e., the incomplete automata generated from the scenarios).

For every newly visited node $T$ (including the root) a model-checking problem
is solved: we form the product of all environment processes, protocol
processes (with the added transitions $T$), and monitors, and we check 
the absence of deadlocks, safety, and liveness violations 
(and optionally also non-blockingness).
The following cases are possible:
\begin{enumerate}
\item No violations are found. In this case, $T$ is a correct solution,
	and the search terminates.
\item A safety or liveness violation is found. This means that this
	candidate solution $T$ is incorrect. Moreover, any other candidate
	$T'$ obtained by adding extra transitions to $T$, i.e., $T'\supseteq T$,
	will also be incorrect, by exhibiting the same violations. 
	This is because adding extra local transitions can only add, 
	but not remove, global transitions. This in turn implies 
	that any reachable error state with $T$ will also be a reachable
	error state with $T'$, so any safety violation with $T$ will also
	be a safety violation with $T'$. Similarly, any reachable accepting
	cycle with $T$ will also be a reachable accepting cycle with $T'$,
	so liveness violations cannot be removed either.
	In conclusion, in this case, the entire sub-tree under $T$ can be
	pruned from the search.
\item No safety nor liveness violation is found, but a deadlock or
\label{case_continue}
	blocking state is found. In this case, $T$ is incorrect, but could
	potentially be made correct by adding more transitions. The search
	continues exploring the children of $T$.
\end{enumerate}
The search algorithm saves every visited node $T$. The same node might be
visited via different paths. For example, adding first $t_1$, then $t_2$,
leads to the same node as adding first $t_2$, then $t_1$. 
To reduce the search space, the search stops (and backtracks) when it finds
a node that has already been visited.
This is clearly sound and complete.
% \red{more?}

The search continues only in Case~\ref{case_continue}.  In this case,
we use heuristics to determine in which order should the children of
$T$ be explored. The heuristic we used prioritizes the children of $T$
according to how similar they are to existing transitions in the
protocol automaton. We deem two transitions as being similar if their
message and destination are the same and if their starting states
already agree on some other transition.  In other words, if two states
handle a message by transitioning to the same state, or
indistinguishable states, the heuristic extrapolates that the states
should also handle other messages in the same manner.  For example, in
Figure~\ref{fig_final_incomplete_abp_sender}, states $q_1$ and $q_4$
both handle message $a_0'$ by transitioning to state $q_5$.  Hence,
the heuristic prioritizes the candidate transition from $q_4$ to $q_3$
on $timeout$, over say $q_4$ to $q_2$, since $q_1$ also handles
$timeout$ by transitioning to $q_3$.  Note that this transition
correctly generalizes the behavior on a single timeout described in
the scenarios to multiple timeouts.

% The {\it deadlock} heuristic prioritizes the children of $T$ according
% to how much each child reduces the number of deadlocks in the product
% automaton.  Each candidate has to be partially evaluated: the extra
% transitions are added to the current node but only the number of
% deadlocks is computed in the target node.  The next candidate is chosen
% according to how large the difference between the current number of
% deadlocks and the target number of deadlocks is.

\subsection{BDD-based Symbolic Computation}
This technique reduces the automata completion problem to an instance
of a model-checking query, which is then solved by using BDD-based
symbolic model-checking techniques. Consider a set of environment
automata $\{E_1, E_2, \ldots, E_m\}$ and a set of (possibly
incomplete) deterministic process automata $\{P_1, P_2, \ldots
P_n\}$, with each $P_i = (Q_i, q_{0,i}, I_i, O_i, T_i)$, as described
in Section~\ref{sec_formalization}. For each state $q_i \in Q_i$ and
for each event $x \in I_i \cup O_i$ such that $q_i \tr{x} q_i' \notin
T$ for any $q_i' \in Q_i$, we introduce a variable $t_{q_i, e}$ whose value
ranges over the set $Q_i \cup \{ \mathsf{\bot} \}$. Intuitively, these
variables encode all possible ways to complete the transition relation
$T_i$, including the possibility that no transition exists. Each of
the transition relations $T_i$ is now parametrized by
the newly introduced variables $t_{q_i, e}$. Let $P_i'$ be the
automata obtained by replacing the transition relation of $P_i$ with
the parametrized transition relation whose construction we have just
described.

We denote the composition $E_1 \| E_2 \| \ldots \| E_m \| P_1' \| P_2'
\| \ldots \| P_n'$ by $\mathcal{P}$ and its transition relation by
$\mathcal{T}$. Note that $\mathcal{T}$ is also parametrized by the
newly introduced $t_{q_i, e}$ variables. Also, the original
composition $E_1 \| E_2 \| \ldots \| E_m \| P_1 \| P_2 \ldots \| P_n$
has only one initial state, by definition, and thus, $\mathcal{P}$ has
only one (parametrized) initial state as well.

Suppose we are given a safety monitor with a set of error states. We
can symbolically represent the states where the monitor automaton is
in an error state by a propositional formula $\varphi$. Now, the
parameter values such that states satisfying $\neg \varphi$ are not
reachable in the composition $\mathcal{P}$ can be obtained by
model-checking $\mathcal{P}$ with the CTL property, $\tle\tlf
\neg\varphi$. If this property is found to be
true, then an erroneous state in the monitor is reachable for every
valuation of the parameters. If the property
is found to be false, then there must exist a valuation for the
parameters which prevents a state satisfying $\neg\varphi$ from being
reachable. These parameter values represent a completion that
satisfies the property that no state satisfying $\neg\varphi$ is
reachable.

\subsubsection{Determinism, Deadlock Freedom, Non-blockingness and
  Liveness.}
We encode constraints for determinism by restricting the set of
initial values that the parameters can take. 

A deadlock state is characterized by the formula $\mathsf{DL} =
\bigwedge_{o \in \bigcup_i O_i} (\neg \mathsf{sync}(o))$, where
$\mathsf{sync}(o)$ is a formula which expresses that the (sole) sender
of the event $o$ is in a state where it can send $o$, and all the
receivers of event $o$ are in states where they can receive an
$o$. Thus the formula $\tle\tlf(\mathsf{DL})$ represents the set of
states which eventually deadlock. A valid completion would render
these states unreachable.

Suppose we are given a liveness automaton with a set $Q_a$ of
accepting states. A completion which is live needs to have no runs
that visit states in $Q_a$ infinitely often. If all states in $Q_a$
are represented symbolically by the propositional formula $Q_a^s$,
then the set of states from which it is possible to visit states in
$Q_a$ infinitely often can be characterized by the CTL formula $\psi =
\tla\tlg \tle\tlf Q_a^s$. Again, we desire that these states be
unreachable in a valid completion.

The non-blockingness requirement can expressed as a safety requirement
$\mathsf{NonBlock}$, and hence can be handled in a similar
manner. Finally, we use a symbolic model-checker (NuSMV~\cite{NuSMV2})
to check if $\mathcal{P}$ satisfies
$\tle\tlf(\neg\varphi \vee \mathsf{DL} \vee \neg\mathsf{NonBlock}
\vee\psi)$. The valuation of parameters for which $\mathcal{P}$ does
not not satisfy this property represents the completion which
has the required safety, liveness, and deadlock-freedom
properties.

\section{Evaluation}
\label{sec_evaluation}
We investigate (1) how effective scenarios are in
reducing the empirical complexity of the automata completion
problem, (2) the amount of generalization that the proposed algorithms
are able to perform, and (3) how adding scenarios reduces the
number of formal specifications required for successful completion.
% We conducted experiments to evaluate: (1)
% whether describing the behavior of a protocol using scenarios reduces
% the computational difficulty of the automata completion problem, the
% baseline for comparison being the automata completion
% problem when only a bound on the number of states is provided, (2)
% how the amount of \emph{generalization} performed by the algorithms
% varied with the amount of information provided to them, and (3) how variation in the
% correctness requirements affects performance. Finally, we discuss
% how adding scenarios can affect the number of necessary correctness requirements
% for successful synthesis.

\subsubsection{Synthesis with no scenarios.}
To validate our hypothesis that scenarios make the synthesis problem
easier, we attempted to synthesize the ABP protocol with no
transitions specified, but with bounds on the number of states of the
processes. These bounds were set to be equal to the corresponding
number of states in the manually constructed version of the ABP
protocol. We required that the protocol satisfy all the properties
discussed in Section~\ref{sec_methodology}.

The BDD-based symbolic algorithm ran out of memory\footnote{We used a
memory limit of 16GB and a time limit of one hour.} and failed to
synthesize a protocol. Recall that the heuristic algorithm performs
generalization by using a similarity metric between states. When the
starting point is an empty transition relation, there is no similarity
between states, and the heuristic fails to differentiate between
candidate transitions. The resulting search procedure runs out of
time.

\subsubsection{Varying the number of scenarios.}
When all four of the scenarios in Figure~\ref{fig_abp_scenarios} are
used, both the explicit search algorithm and the BDD-based search are
able to find a correct completion for the protocol. Moreover, both the
algorithms find a correct completion, when applied to just \emph{one}
scenario. A quantitative summary of our experiments can be found in
Table~\ref{tab:experimental_results}.  We applied our algorithms on
the incomplete automata constructed from the first scenario (row 1),
the second scenario (row 2), and all four scenarios (row 3).  For each
case, we report the number of states of the incomplete automata, the
number of transitions in the completions found by the algorithms, and
their computational requirements.

In the case of all four scenarios, the ranking heuristic described in
Section~\ref{sec_search} chooses a candidate transition that is part
of the correct completion at every step.  The search does not
backtrack, does not prune any nodes, and takes less than a minute to
complete.  When only the first scenario is used, the ranking heuristic
is less effective in the same way as when no scenarios were used.
However, the additional structure imposed by the scenarios
significantly constrains the search space and the explicit search with
pruning successfully returns a completion.  The incomplete automata
that correspond to scenario 2 include intermediate states that the
heuristic can use to successfully rank candidate edges. The runtime of
the explicit search algorithm varies with the order in which candidate
transitions are explored. We report the 75th percentile over several
randomly chosen orders.

In contrast, the BDD-based search performs better when a single
scenario is used rather than all four.  The intuition behind this is
that the number of BDD variables is smaller when there are fewer
intermediate states in the incomplete automata.  As a result, the
search finishes in less than 15 seconds when only the first scenario
of Figure~\ref{fig_abp_scenarios} is used, less than 15 minutes when
only the second scenario is used, and less than 35 minutes when all
four are used.  Synthesis takes longer with the second scenario
because the incomplete sender automaton, as mentioned earlier, has more
intermediate states than the first scenario.

\begin{table}[!t]
  \caption{Quantitative summary of experiments.}

  \setlength{\extrarowheight}{2pt}
  \begin{center}
  \begin{tabular}{|>{\centering}m{0.15\textwidth}|>{\centering}p{0.11\textwidth}|>{\centering}p{0.11\textwidth}|>{\centering}p{0.15\textwidth}|>{\centering}m{0.12\textwidth}|>{\centering}m{0.12\textwidth}|>{\centering}m{0.12\textwidth}|}
  \hline
  \multirow{2}{0.15\textwidth}{} &
  \multicolumn{2}{>{\centering}m{0.22\textwidth}|}{Number of states in
    incomplete automaton} &
  \multirow{2}{0.15\textwidth}{\centering{}Number of transitions to be
    completed} & \multicolumn{3}{c|}{Computational
    requirements}\tabularnewline
  \cline{2-3} \cline{5-7} 
   & Sender & Receiver &  & Heuristic time & BDD time & BDD memory \tabularnewline
  \hline
  \hline
  Scenario 1 & 6 & 6 & 6 & 4 min. & 15 sec. & 100MB \tabularnewline
  \hline
  Scenario 2 & 10 & 6 & 8 & 30 sec. & 15 min. & 1GB \tabularnewline
  \hline
  All scenarios & 12 & 8 & 8 & 45 sec. & 35 min. & 3.8GB \tabularnewline
  \hline
  \end{tabular}
  \end{center}
  \label{tab:experimental_results}
\end{table}

% Next, we considered the scenarios in Figure~\ref{fig_abp_scenarios}.
% When scenarios are specified, both
% algorithms return a correct completion:
% the BDD-based search when a single scenario is used,
% and the heuristic algorithm in all cases.

% the heuristic algorithm 

% When all four scenarios are used, the heuristic search 

% The BDD-based search was able
% to synthesize a correct protocol in about \red{40} minutes, with a
% memory usage of about \red{6} GB. The heuristic algorithm converged to
% a correct solution in less than a minute and required a few megabytes
% of memory.

\subsubsection{Generalization and inference of unspecified behaviors.}

% To study the behavior of the algorithms when not all scenarios are
% specified,
% we attempted to synthesize the ABP protocol using only the
% first scenario from Figure~\ref{fig_abp_scenarios}.
% This scenario describes the
% behavior of the protocol when no messages are lost or duplicated.
% We note that this one scenario
% covers all the states in the manually constructed version of the ABP
% protocol. The BDD-based symbolic algorithm returns a correct solution
% in about \red{10} seconds and uses about \red{100} MB of memory. Just
% as in the case when no scenarios were specified, the heuristic
% algorithm is unable to find any similar states and again reduces to an
% explicit search over all completions.  Therefore it does not return a
% solution in \red{?}  minutes. 

With just one scenario specified, the algorithms successfully perform
the generalization required to obtain a correct completion. We believe
that the generalization performed is non-obvious: the correct protocol
behaviors on packet loss, loss of acknowledgments and message
duplication are inferred, even though the scenario does not describe
what needs to happen in these situations. As can be seen in
Figure~\ref{fig_first_scenario_incomplete_automata}, the incomplete
automata constructed from the scenario only describe the protocol
behavior over lossless channels. The algorithms are guided solely by
the liveness and safety specifications to infer the correct behavior.
In contrast, when all four scenarios are specified, the scenarios
already contain information about the behavior of the protocol when a
single packet loss or a single message duplication occurs. The
algorithm thus needs to only generalize this behavior to handle an
arbitrary number of packet losses and message duplications.

% In the specific case of the scenarios for the ABP protocol in
% Figure~\ref{fig_abp_scenarios}, any two
% scenarios produce the same incomplete automata as when all four scenarios
% combined or as when just one scenario is produced. \red{Verify the
%   veracity of this vindicated vociferous vocalization of verity}.
% So we could not study the behavior of the algorithms when we restrict
% the number of scenarios to two or three. However, we were able to get
% a different set of incompete automata by considering only the second
% scenario in Figure~\ref{fig_abp_scenarios}. These incomplete automata
% had more states than the ones we obtained using only the first
% scenario from Figure~\ref{fig_abp_scenarios}. The BDD-based symbolic
% algorithm was again able to synthesize a correct protocol in about 10
% minutes using a few hundred megabytes of memory. The heuristic
% algorithm was unable to synthesize a correct protocol in a reasonable
% amount of time. Although these automata have a larger number of states
% than the minimal automata, there still isn't enough similarity between
% the states that the heuristic algorithm can exploit to find a correct
% solution and therefore it times out.

\subsubsection{Varying the Correctness Requirements.}
We observed that when fewer scenarios were used, we needed to specify
more properties --- some of which were non-obvious --- so that the
algorithms could converge to a correct completion. For instance, when
only one scenario was specified, we needed to include the liveness
property that every deliver message was eventually followed by a send
message. Owing to the structure of the incomplete automata, this
property was not necessary to obtain a correct completion when all
four scenarios were specified. Another property which was necessary to
reject trivial completions when no scenarios were specified was that
there has to be at least one send message in every run. Therefore, in
some cases, using scenarios can compensate for the lack of detailed
formal specifications.

\subsubsection{Discussion of Experimental Results.}
The experimental results clearly demonstrate that specifying the
behavior of the protocol using scenarios is essential for the
algorithms we have evaluated to be able to construct a correct
completion. In particular, even providing just one scenario allows the
algorithms to converge on a correct completion within a reasonable
amount of time. In contrast, none of the approaches we have evaluated
were successful in synthesizing the required protocol when no
scenarios were provided. An interesting trend that we observed was
that the heuristic algorithm and the BDD-based symbolic algorithm seem
to complement each other. Specifically, the BDD-based symbolic
algorithm was effective when the number of scenarios --- and therefore
the number of states in the incomplete automata --- was small. On the
other hand, the heuristic search does better when more information is
provided in the form of additional scenarios --- which in turn causes
a larger number of states in the incomplete automata to be
similar. This is because the BDD-based symbolic algorithm essentially
constructs BDDs for \emph{all} possible completions and picks one that
satisfies the properties. In general, the number of possible
completions increases with the number of states in the automata, which
explains why the BDD-based algorithm performs well when the number of
states in the automaton is small. The heuristic algorithm exploits
similarities between states to converge on a correct completion. When
the algorithm is provided with the incomplete automaton from the first
scenario --- which has a minimal number of states --- it is unable to
find states which are similar and thus degenerates to an exhaustive
search over all completions.  Finally, we also observe that additional
scenarios can compensate for unspecified correctness properties. This
frees the protocol designer from having to specify the requirements
completely formally.

\begin{figure}[!t]
  \centering
  \begin{minipage}{.4\textwidth}
    \centering
    \resizebox{\textwidth}{!}{
    \begin{tikzpicture}[node distance=2cm,->,>=stealth',shorten >=1pt, align=center]
      \node[state,initial,initial text={}] (s0) {};
      \node[state] (s1) [above right=.3cm and 1cm of s0] {};
      \node[state] (s2) [right=of s1] {};
      \node[state] (s3) [below right=.3cm and 1cm of s2] {};
      \node[state] (s4) [below left=.3cm and 1cm of s3] {};
      \node[state] (s5) [left=of s4] {};
      \path
      (s0) edge node[above left] {$send!$} (s1)
      (s1) edge node[above] {$p_0!$} (s2)
      (s2) edge node[above right] {$a_0'?$} (s3)
      (s3) edge node[below right] {$send!$} (s4)
      (s4) edge node[above] {$p_1!$} (s5)
      (s5) edge node[below left] {$a_1'?$} (s0)
      ;
    \end{tikzpicture}}
  \end{minipage}\qquad
  \begin{minipage}{.35\textwidth}
    \centering
    \resizebox{\textwidth}{!}{
    \begin{tikzpicture}[node distance=3cm,->,>=stealth',shorten >=1pt,align=center]
      \node[state,initial,initial text={}] (r0) {};
      \node[state, above right=.3cm and 1cm of r0] (r1) {};
      \node[state, right=1.4cm of r1] (r2) {};
      \node[state, below right=.3cm and 1cm of r2] (r3) {};
      \node[state, below left=.3cm and 1cm of r3] (r4) {};
      \node[state, left=1.4cm of r4] (r5) {};
      \path
      (r0) edge node[above left] {$p_0'?$} (r1)
      (r1) edge node[above] {$deliver!$} (r2)
      (r2) edge node[above right] {$a_0!$} (r3)
      (r3) edge node[below right] {$p_1'?$} (r4)
      (r4) edge node[above] {$deliver!$} (r5)
      (r5) edge node[below left] {$a_1!$} (r0)
      ;
    \end{tikzpicture}}
  \end{minipage}
  \caption{\label{fig_first_scenario_incomplete_automata} Incomplete automata constructed from the first scenario of Figure~\ref{fig_abp_scenarios}.}
\end{figure}
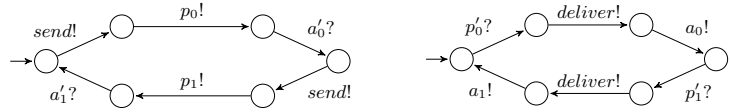

\section{Conclusions}
The main contribution of this paper is a new methodology, supported by
automatic synthesis techniques, for specifying finite-state
distributed protocols using a mix of representative behaviors and
correctness requirements. The synthesizer derives a skeleton of the
state machine for each process using the states that appear in the
scenarios and then finds a completion that satisfies the
requirements. The promise of the proposed method is demonstrated by
the ability of the synthesizer to learn the correct ABP protocol from
just a single scenario corresponding to the typical case. Future
research should focus on the scalability of the algorithms for
protocol completion. One idea is to heuristically limit the choices
for targets of transitions before applying the BDD-based symbolic
computation, and another approach is to reduce the number of states by
representing the implementation as an \emph{extended} FSM with
variables.
\bibliographystyle{splncs}
\bibliography{cav14}
\newpage
\appendix
\section*{Appendix: Complexity of Automata Completion}

\begin{theorem}\label{claim:probsimple_np}
  Problem~\ref{probsimple} is in NP.
\end{theorem}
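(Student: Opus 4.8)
The plan is to exhibit a polynomial-size certificate together with a polynomial-time verification procedure. The natural certificate for a \textsc{yes}-instance $(E,P)$ of Problem~\ref{probsimple} is a completion itself, i.e., a set of transitions $T$ added to $P$. First I would observe that any relevant $T$ satisfies $T \subseteq Q_P \times (I_P \cup O_P) \times Q_P$, so $|T| \le |Q_P|^2 \cdot |I_P \cup O_P|$, which is polynomial in the size of the input. Hence the certificate has polynomial length, and in particular there is no need to worry about whether a ``small'' completion exists: every completion is automatically of polynomial size.

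Given such a candidate $T$, the verifier forms $P' := (Q_P, q_{0,P}, I_P, O_P, T_P \cup T)$ and performs two checks. (1) \emph{Determinism of $P'$}: iterate over every state $q \in Q_P$ together with its outgoing transitions, and check that if $q$ has more than one outgoing transition then all of them carry distinct input labels (equivalently, by the characterization in Section~\ref{subsec_fs_io_automata}, that $q$ is a deadlock, an input state, or an output state). This runs in time polynomial in $|P'|$. (2) \emph{Deadlock-freedom of $E \| P'$}: construct the product automaton, whose state space is $Q_E \times Q_P$ and whose transition relation is obtained directly from the composition rules of Section~\ref{subsec_composition}; both are of size polynomial in $|E| + |P'|$. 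Then run a graph search (BFS or DFS) from the initial state to compute the set of reachable states, and verify that each reachable state has at least one outgoing transition. All of this is polynomial time.

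If both checks succeed, the verifier accepts. Correctness is immediate from the definition of Problem~\ref{probsimple}: the pair $(E,P)$ is a \textsc{yes}-instance if and only if there exists a $T$ passing both checks, and every such $T$ has polynomial size and is verified in polynomial time. Therefore Problem~\ref{probsimple} is in \textsc{NP}. I do not expect a genuine obstacle here; the only points needing a little care are (i) noting that the space of candidate transitions is itself only polynomially large, so certificate size is a non-issue, and (ii) confirming that the product $E \| P'$ — being a product of just two automata — has polynomial size, which is precisely the feature distinguishing Problem~\ref{probsimple} from the general Problem~\ref{probgeneral}.
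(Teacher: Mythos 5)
Your proposal is correct and follows essentially the same route as the paper's proof: guess the completion $T$ as the certificate, check determinism of $P'$ in polynomial time, and check deadlock-freedom of $E \| P'$ by explicit reachability on the product, which is polynomial precisely because only two automata are involved. You simply spell out the certificate-size bound and the verification steps in more detail than the paper does.
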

\begin{proof}
  Problem~\ref{probsimple} is in NP since we can guess a
completion, and then check whether the requirements of the problem
are satisfied. Checking whether the
resulting process automaton is deterministic can be done in polynomial time.
Checking whether the product automaton is deadlock-free
can be done in polynomial time also, because there are only two automata in
the product.
\end{proof}

\begin{theorem}
  Problem~\ref{probsimple} is NP-complete.
\end{theorem}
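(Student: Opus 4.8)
The plan is to prove NP-hardness; membership in \textsc{NP} is already given by Theorem~\ref{claim:probsimple_np}. I would reduce from 3-SAT: from a 3-CNF formula $\phi$ with variables $x_1,\dots,x_n$ and clauses $C_1,\dots,C_m$, construct in polynomial time an environment automaton $E_\phi$ and an incomplete deterministic process automaton $P_\phi$ (with $E_\phi \| P_\phi$ defined) such that $(E_\phi,P_\phi)$ admits a deterministic, deadlock-free completion iff $\phi$ is satisfiable.

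The process $P_\phi$ is a chain of \emph{selection} states $\mathit{sel}_1,\dots,\mathit{sel}_{n+1}$ ($\mathit{sel}_1$ initial), where each of $\mathit{sel}_1,\dots,\mathit{sel}_n$ is a deadlock in the incomplete automaton, together with a pre-wired tail $\mathit{sel}_{n+1}\tr{\mathit{chk}!}\mathit{halt}$ and $\mathit{halt}\tr{\mathit{tick}!}\mathit{halt}$ (so the incomplete $P_\phi$ is deterministic and only the $\mathit{sel}_i$'s need fixing). The \emph{intended} completion adds, for each $i$, either $\mathit{sel}_i\tr{t_i!}\mathit{sel}_{i+1}$ or $\mathit{sel}_i\tr{f_i!}\mathit{sel}_{i+1}$, the choice encoding the truth value of $x_i$. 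The environment $E_\phi$ first nondeterministically commits to one clause $C_j$ (via $m$ output transitions on fresh symbols $c_j\notin I_{P_\phi}$, which therefore fire alone), and then runs a \emph{checking chain} for $C_j$ over states $e^{(j)}_{i,b}$ with $i\in\{1,\dots,n+1\}$ and $b\in\{0,1\}$: at $e^{(j)}_{i,b}$ it insists on synchronizing on exactly $t_i$ or $f_i$, moving $b$ to $1$ as soon as the literal chosen for $x_i$ satisfies $C_j$. Finally $e^{(j)}_{n+1,1}$ accepts $\mathit{chk}$ (after which $P_\phi$ loops forever on $\mathit{tick}$, keeping the product alive), while $e^{(j)}_{n+1,0}$ has no outgoing transition on $\mathit{chk}$, so the product state $(\mathit{sel}_{n+1},e^{(j)}_{n+1,0})$ is a deadlock. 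Both automata are clearly polynomial: $n+2$ states for $P_\phi$ and $O(mn)$ states for $E_\phi$.

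With this construction, the ``if'' direction is easy: a satisfying assignment $\alpha$ gives a completion for which, on every branch (every clause guess $C_j$), the run walks down the chain, reaches $e^{(j)}_{n+1,1}$ since $\alpha\models C_j$, and then lives forever in $(\mathit{halt},e_{\mathrm{ok}})$ on the $\mathit{tick}$ self-loop; no deadlock is reachable. For the ``only if'' direction, observe first that the intended completions are the only deadlock-free ones, and then that among those exactly the satisfying ones work: any completion encoding $\alpha$ has a reachable deadlock iff some clause $C_j$ with $\alpha\not\models C_j$ exists, because guessing that clause strands the product at $(\mathit{sel}_{n+1},e^{(j)}_{n+1,0})$.

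I expect the \textbf{main obstacle} to be proving that no \emph{degenerate} completion escapes a reachable deadlock: a completion could add an input transition at some $\mathit{sel}_i$, or an output transition with a ``wrong'' label, or one that self-loops, skips ahead, or jumps back in the chain. The key point is the rigidity of $E_\phi$: in every reachable product state during the selection phase, $E_\phi$ is a pure input state willing to synchronize on exactly the pair $\{t_i,f_i\}$ and on nothing else, so any deviation by $P_\phi$ (including a livelock-style self-loop that advances $E_\phi$ by one step) leaves the product in a state where $P_\phi$ can only offer an output $E_\phi$ will not accept and $E_\phi$ offers no output of its own — a deadlock. A short induction along the chain then pins down that the only deadlock-free completions are the intended $2^n$ ones, and the clause-checking argument above finishes the reduction. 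A secondary thing to get right is keeping $E_\phi$ polynomial, which is exactly why it commits to a single clause and tracks only the one bit $b$ rather than the whole assignment; tracking the assignment would blow $E_\phi$ up to $2^n$ states.
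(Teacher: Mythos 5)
Your reduction is sound in outline but genuinely different from the paper's. The paper encodes the assignment in \emph{input} transitions out of a single hub state $q_0^P$: the environment repeatedly challenges with $x_i^D$, the completion must route $x_i^D?$ to either $q_i^t$ or $q_i^f$, and determinism of the completed process is what guarantees that a variable queried several times (once per occurrence in a clause) always gets the same answer; the environment is one long deterministic chain through every literal of every clause, ending in $\mathit{success}$ or $\mathit{deadlock}$. You instead encode the assignment in \emph{output} transitions along a chain that assigns each variable exactly once, and you use environment nondeterminism (the $c_j$ guesses) to quantify over clauses, so consistency across clauses is automatic. Both constructions are polynomial and both ultimately rely on the same two levers: determinism limits each state to one choice, and deadlock-freedom forces a choice to be made and to be one the environment will accept.

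There is, however, one step that fails as written. Under the paper's composition semantics an output $x$ of $P$ with $x\notin I_E$ fires \emph{autonomously}; ``$E$ will not accept it'' only blocks the transition when $x\in I_E$. Your liveness tail $\mathit{halt}\tr{\mathit{tick}!}\mathit{halt}$, read as $P$ looping on its own, puts $\mathit{tick}\in O_P\setminus I_E$ --- and then the degenerate completion $\{\mathit{sel}_1\tr{\mathit{tick}!}\mathit{sel}_1\}$ is deterministic and leaves every reachable product state (the initial state and each $(\mathit{sel}_1,e^{(j)}_{1,0})$) with an outgoing transition, so the instance is a yes-instance regardless of $\phi$. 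The fix is easy but must be stated: set $I_{P_\phi}=\emptyset$ (so only output transitions can be added, and determinism caps each state at one) and require $O_{P_\phi}\subseteq I_{E_\phi}$, including $\mathit{tick}$, giving $e_{\mathrm{ok}}$ a $\mathit{tick}?$ self-loop (or instead drive the terminal loop from $E$'s side with a fresh output not in $I_{P_\phi}$). With that, every added output either synchronizes with $E$ or is blocked, your rigidity induction goes through, and the reduction is correct. A last nit: the deadlock-free completions are not literally your $2^n$ intended ones (the targets may permute the chain's states), but the emitted label sequence is still forced to be $\ell_1\cdots\ell_n$ with $\ell_i\in\{t_i,f_i\}$, which is all the extraction of a satisfying assignment requires.
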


\begin{proof}
We will show that 3SAT is reducible to Problem~\ref{probsimple}.
Together with Theorem~\ref{claim:probsimple_np} this shows that Problem~\ref{probsimple} is NP-complete.

Let $U = \{u_1,u_2,...,u_n\}$ be a set of variables and $C = \{c_1,c_2,...,c_m\}$
be a set of clauses, such that $|c_i|=3$ for $1\le i\le m$, making up an arbitrary instance of 3SAT.

We write $z^1_j$, $z^2_j$, $z^3_j$ for the literals of $c_j$.
We write $v^i_j$ for the variable of literal $z^i_j$.
Note that $z^i_j$ can be equal to $v^i_j$ or its negation $\overline{v^i_j}$.

We construct a process automaton $P$ and an environment $E$ with $E\| P$ defined, such that the following are equivalent:
\begin{itemize}
\item there is a set of transitions $T$ such that if $P'$ is the completion of $P$ with $T$, then $P'$ is deterministic and $E\| P'$ has no reachable deadlock states 
\item there is a truth assignment for $U$ that satisfies all clauses in $C$.
\end{itemize}

The process automaton has an initial state $q_0$, and a pair of states $q^t_i$ and $q^f_i$
for each variable $u_i$. At each step, the environment challenges the process to
instantiate some variable $u_i$ by transmitting the message $x^D_i$. On receipt of
this message, the completed process has to transition to one of $q^t_i$ and $q^f_i$.
It responds with either $x^t_i$ or $x^f_i$ indicating the assignment made to $u_i$.
The environment performs this challenge for each literal in each clause, and enters
a deadlock state if any clause is left unsatisfied. It follows that a completion
exists iff $C$ is satisfiable.

P is an automaton $(Q^P,q^P_0,I^P,O^P,T^P)$ such that:
\begin{itemize}
\item $Q^P = \{q^P_0\} \cup \{q^{t}_i, q^f_i \st i \in [1,n]\}$
\item $I^P = \{x_s\} \cup \{x^D_i \st i\in [1,n]\}$
\item $O^P = \{x^t_i, x^f_i \st i\in [1,n]\}$
\item $T^P = \{(q^t_i, x^t_i, q^P_0), (q^f_i, x^f_i, q^P_0) \st i\in [1,n]\} \cup \{(q^P_0, x_s, q^P_0)\}$
\end{itemize}

E is an automaton $(Q^E,q^E_0,I^E,O^E,T^E)$ such that:
\begin{itemize}
\item $Q^E = \{deadlock, success\} \cup \{q^D_{i,j}, q^V_{i,j}) \st \text{$i\in \{1,2,3\}$ and $j\in [1,m]$}\}$
\item $q^E_0 = q^D_{1,1}$
\item $I^E = \{x^t_i, x^f_i \st i\in [1,n]\}$
\item $O^E = \{x_s\} \cup \{x^D_i \st i\in [1,n]\}$
\item
 $T^E$ is the smallest set such that the following hold:
\begin{itemize}
\item
for $i\in \{1,2\}$ and $j\in[1,m-1]$, $(q^V_{i,j}, x^f_k, q^D_{i+1,j}), (q^V_{i,j}, x^t_k, q^V_{1,j+1}) \in T_E$ if $z^i_j = u_k$,
\item
for $i\in \{1,2\}$, $(q^V_{i,m}, x^f_k, q^D_{i+1,m}), (q^V_{i,j}, x^t_k, success) \in T_E$ if $z^i_m = u_k$,
\item
$(q^V_{3,m}, x^t_k, success)$ and $(q^V_{3,m}, x^f_k, deadlock)$ if $z^i_m = u_k$,
\item
for $i\in \{1,2\}$ and $j\in[1,m-1]$, $(q^V_{i,j}, x^t_k, q^D_{i+1,j}), (q^V_{i,j}, x^f_k, q^V_{1,j+1}) \in T_E$ if $z^i_j = \overline{u_k}$,
\item
for $i\in \{1,2\}$, $(q^V_{i,m}, x^t_k, q^D_{i+1,m}), (q^V_{i,j}, x^f_k, success) \in T_E$ if $z^i_m = \overline{u_k}$,
\item
$(q^V_{3,m}, x^f_k, success)$ and $(q^V_{3,m}, x^t_k, deadlock)$ if $z^i_m = \overline{u_k}$, and
\item
$(success, x_s, success) \in T_E$.
\end{itemize}
\end{itemize}

Note that $|Q^P| = |T^P| = 2\cdot n + 1$,
$|Q^E| = 6\cdot m + 2$, and $|T^E| = 9 \cdot m + 1$,
and $P$ and $E$ can be constructed in polynomial time.

For $U = \{u_1, u_2, u_3\}$, $C = \{c_1, c_2\}$, $c_1 = \{u_1, \overline{u_2}, u_3\}$,
and $c_2 = \{\overline{u_1}, \overline{u_2}, u_3\}$,
the process automaton $P$ and the environment automaton are shown in Figure~\ref{fig:process_automaton} and in Figure~\ref{fig:environment_automaton} respectively.

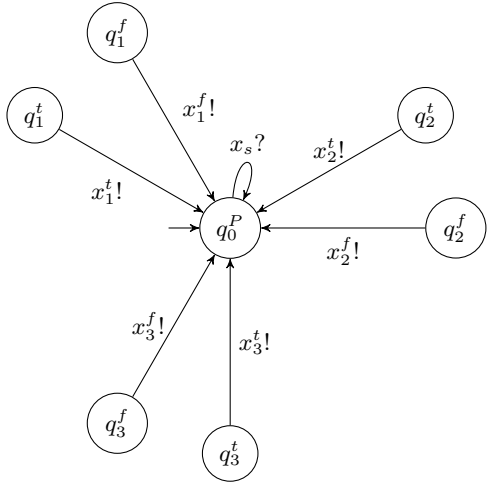
\begin{figure}[t]
  \centering
  \begin{tikzpicture}
    \path
    (0,0) node[initial, initial text={}, state] (initial) {$q^P_0$}
    +(150:3) node[state] (u1 decision true) {$q^t_1$}
    +(120:3) node[state] (u1 decision false) {$q^f_1$}
    (u1 decision true)
    edge[->] node[below left] {$x^t_1!$}
    (initial)
    (u1 decision false)
    edge[->] node[above right] {$x^f_1!$}
    (initial)
    ;
    \path
    (0,0)
    +(30:3) node[state] (u2 decision true) {$q^t_2$}
    +(0:3) node[state] (u2 decision false) {$q^f_2$}
    (u2 decision true)
    edge[->] node[above] {$x^t_2!$}
    (initial)
    (u2 decision false)
    edge[->] node[below] {$x^f_2!$}
    (initial)
    ;
    \path
    (0,0)
    +(-90:3) node[state] (u3 decision true) {$q^t_3$}
    +(-120:3) node[state] (u3 decision false) {$q^f_3$}
    (u3 decision true)
    edge[->] node[right] {$x^t_3!$}
    (initial)
    (u3 decision false)
    edge[->] node[left] {$x^f_3!$}
    (initial)
    (initial) edge[->,out=85,in=65,looseness=12] node[above] {$x_s?$} (initial)
    ;

  \end{tikzpicture}
  \caption{Process automaton before completion. Choosing an assignment to a boolean variable $u_i$ of the SAT problem means choosing whether to add an input transition labeled $x_i^D?$ from $q_0^P$ to $q_i^t$ (in which case $u_i$ is assigned to $\true$) or to $q_i^f$ (in which case $u_i$ is assigned to $\false$).}
  \label{fig:process_automaton}
\end{figure}

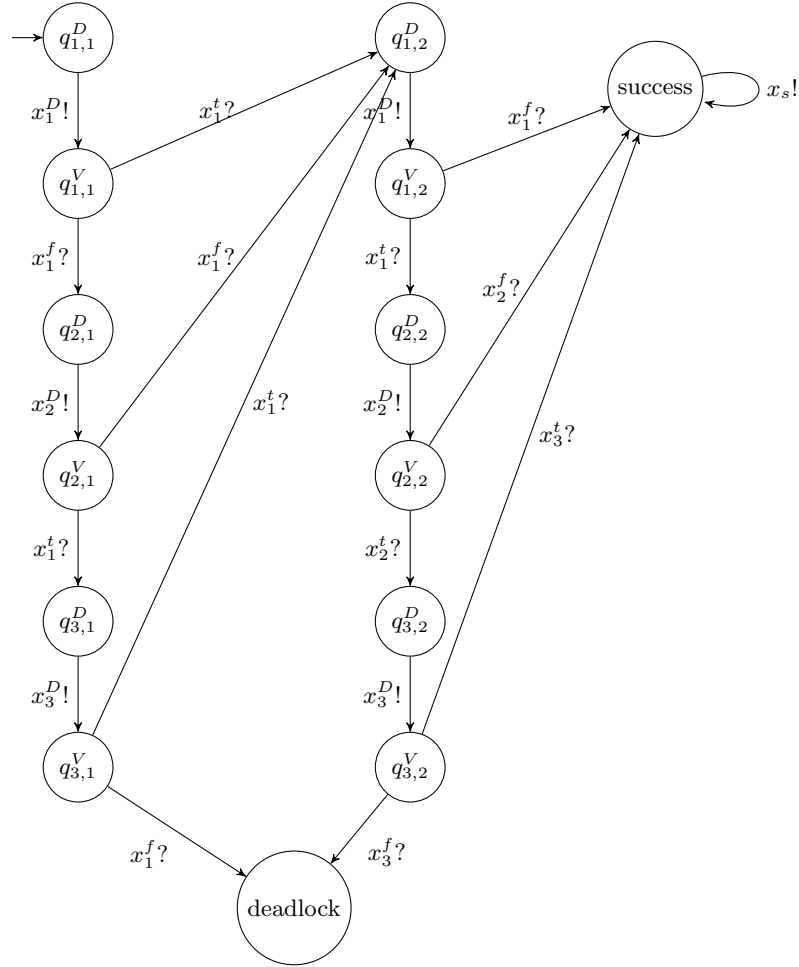
\begin{figure}[!t]
  \centering
  \resizebox{0.95\textwidth}{!}{
  \begin{tikzpicture}
    \path
    node (clause one) {Clause $c_1 = \{u_1, \overline{u_2}, u_3\}$}
    node[right=.8cm of clause one] (clause 2) {Clause $c_2 = \{\overline{u_1}, \overline{u_2}, u_3\}$}
    node[initial, initial text={}, state, below=.3cm of clause one] (ask u1) {$q^D_{1,1}$}
    node[state, below=of ask u1] (asked u1) {$q^V_{1,1}$}
    node[state, below=of asked u1] (ask u2) {$q^D_{2,1}$}
    node[state, below=of ask u2] (asked u2) {$q^V_{2,1}$}
    node[state, below=of asked u2] (ask u3) {$q^D_{3,1}$}
    node[state, below=of ask u3] (asked u3) {$q^V_{3,1}$}
    node[state, below right=1cm and 2cm of asked u3] (deadlock) {deadlock}

    % Clause C_2
    node[state, below=.3cm of clause 2] (ask u1 c2) {$q^D_{1,2}$}
    node[state, below=of ask u1 c2] (asked u1 c2) {$q^V_{1,2}$}
    node[state, below=of asked u1 c2] (ask u2 c2) {$q^D_{2,2}$}
    node[state, below=of ask u2 c2] (asked u2 c2) {$q^V_{2,2}$}
    node[state, below=of asked u2 c2] (ask u3 c2) {$q^D_{3,2}$}
    node[state, below=of ask u3 c2] (asked u3 c2) {$q^V_{3,2}$}
    node[state, below right=of clause 2] (success) {success}

    % Clause C_1
    (ask u1) edge[->] node[left] {$x^D_1!$} (asked u1)
    (asked u1) edge[->] node[left] {$x^f_1?$} (ask u2)
    (asked u1) edge[->] node[left] {$x^t_1?$} (ask u1 c2)

    (ask u2) edge[->] node[left] {$x^D_2!$} (asked u2)
    (asked u2) edge[->] node[left] {$x^t_1?$} (ask u3)
    (asked u2) edge[->] node[left] {$x^f_1?$} (ask u1 c2)

    (ask u3) edge[->] node[left] {$x^D_3!$} (asked u3)
    (asked u3) edge[->] node[below left] {$x^f_1?$} (deadlock)
    (asked u3) edge[->] node[right] {$x^t_1?$} (ask u1 c2)

    % Clause C_m
    (ask u1 c2) edge[->] node[left] {$x^D_1!$} (asked u1 c2)
    (asked u1 c2) edge[->] node[left] {$x^t_1?$} (ask u2 c2)
    (asked u1 c2) edge[->] node[above] {$x^f_1?$} (success)

    (ask u2 c2) edge[->] node[left] {$x^D_2!$} (asked u2 c2)
    (asked u2 c2) edge[->] node[left] {$x^t_2?$} (ask u3 c2)
    (asked u2 c2) edge[->] node[left] {$x^f_2?$} (success)

    (ask u3 c2) edge[->] node[left] {$x^D_3!$} (asked u3 c2)
    (asked u3 c2) edge[->] node[below right] {$x^f_3?$} (deadlock)
    (asked u3 c2) edge[->] node[right] {$x^t_3?$} (success)

    % success node
    (success) edge[loop right] node[right] {$x_s!$} (success)
    ;
  \end{tikzpicture}}
  \caption{Environment Automaton}
  \label{fig:environment_automaton}
\end{figure}

We now prove the correctness of the reduction:

\subsubsection{3SAT $\implies$ Problem~\ref{probsimple}: }

Assume that there is a truth assignment $t$ for $U$
that satisfies all clauses in $C$.

Let $T$ be the set of transitions such that
$(q^P_0, x^D_i, q^t_i) \in T$ iff $t(u_i) = \true$ and
$(q^P_0, x^D_i, q^f_i) \in T$ iff $t(u_i) = \false$.

Let $P'$ be the completion of $P$ with $T$.
It is easy to see that every transition leaving state $q^P_0$ has a distinct label,
since we add exactly one transition for each label $x^D_i$.
Therefore $P'$ is deterministic.
We now show that $E\| P'$ has no reachable deadlock states.

Let $R$ be the sequence of states in a run of $E\| P'$.
If $k$ is even then $R(k)$ is of the form $(q^D_{i,j}, q^P_0)$, $(success, q^P_0)$, or $(deadlock, q^P_0)$.
In the first case, $R(k)$ has exactly one outgoing edge and $R(k+1)$ is of the form $(q^V_{i,j}, q)$, where if $u_k = v^i_j$, $q$ is either $q^t_k$ or $q^f_k$. In the second case $R(k+1) = R(k)$ and in the third case $R(k)$ has no outgoing transitions.
By construction of $E$ and $P'$, $(q^V_{i,j}, q)$, where $q$ is either $q^t_k$ or $q^f_k$, has exactly one outgoing transition.
Therefore, if $(deadlock, q^P_0)$ is not reachable, then $E\| P'$ has no reachable deadlock states.

We assume that $(deadlock, q^P_0)$ is reachable and arrive at a contradiction.
If it is then there is $j \in [1,m]$ such that the following hold:
$(q^V_{1,j}, q_1) \tr{x_1!} (q^D_{2,j}, q^P_0)$,
$(q^V_{2,j}, q_2) \tr{x_2!} (q^D_{3,j}, q^P_0)$, and
$(q^V_{3,j}, q_3) \tr{x_3!} (deadlock, q^P_0)$,
where one of the following holds for each pair $q_i$ and $x_i$:
\begin{itemize}
\item $z^i_j = u_{k}$ and $q_i = q^f_k$ and $x_i = x^f_k$,
\item $z^i_j = \overline{u_k}$ and $q_i = q^t_k$ and $x_i = x^t_k$.
\end{itemize}

Furthermore, by the construction of $T$, $(q^V_{i,j}, q^f_k)$ is
reachable if $(q^P_0, x^D_k, q^f_k) \in T$ or if $t(u_k) = \false$,
and $(q^V_{i,j}, q^t_k)$ is reachable if $(q^P_0, x^D_k, q^t_k) \in T$
or if $t(u_k) = \true$.

We have shown that there is $j \in [1,m]$ such that
for every $i\in [1,3]$,
if $z^i_j = u_k$ then $t(u_k) = \false$ and
if $z^i_j = \overline{u_k}$ then $t(u_k) = \true$
which implies that $t(c_j) = \false$
and contradicts $t$ being a satisfying assignment.

\subsubsection{Problem~\ref{probsimple} $\implies$ 3SAT: }

Assume $T$ is a set of transitions, such that $P'$ is the completion of $P$ with $T$,
$P'$ is deterministic, and $E\| P'$ has no reachable deadlock states.

We construct a truth assignment $t$ for $U$ that satisfies all the clauses in $C$.

Since $E\| P'$ is deadlock free,
for every $i \in [1,3]$ and $j \in [1,m]$,
if a state $(q^V_{i,j}, q)$ is reachable,
because the only outgoing transitions from $q^V_{i,j}$ in $E$
are input transitions with labels $x^t_k$ and $x^f_k$, where $u_k = v^i_j$,
and the only two states in $P'$ that have outgoing output transitions
with such labels are $q^t_k$ and $q^f_k$,
then $q$ has to be $q^t_k$ or $q^f_k$.

Similarly, the only outgoing transition from $q^D_{i,j}$ in $E$,
if $u_k = v^i_j$, is an output transition with label $x^D_k$,
and the only state in $P'$ that can have an outgoing input transition
with label $x^D_k$ is $q^P_0$.
Therefore, if $(q^D_{i,j}, q)$ is reachable in $E\| P'$ then $q$ is $q^P_0$.

Combining the last two observations,
and because $q^V_{i,j}$ always follows $q^D_{i,j}$ in $E$,
if a state $(q^V_{i,j}, q)$ is reachable, then either
$(q^P_0, x^D_k, q^t_k) \in T$ or $(q^P_0, x^D_k, q^f_k) \in T$.

We construct a truth assignment $t$ for $U$ as follows:
for every $u_k$ such that there exists $i$ and $j$ such that $v^i_j = u_k$
and $(q^V_{i,j}, q)$ is reachable,
we set $t(u_k)$ to $\true$ if $(q^P_0, x^D_k, q^t_k) \in T$
and to $\false$ if $(q^P_0, x^D_k, q^f_k) \in T$.
We assign an arbitrary value to the remaining variables in $U$.
$t$ is well-defined because $P'$ is deterministic and thus it cannot be the case that both $(q^P_0, x^D_k, q^t_k) \in T$ and $(q^P_0, x^D_k, q^f_k) \in T$.

We show that $t$ satisfies all clauses in $C$.

No deadlock states are reachable in $E\| P'$, therefore $(deadlock, q)$ is not reachable.

Therefore for each $j\in [1,m]$ there is $i \in [1,3]$
such that one of the following holds:
\begin{itemize}
\item $z^i_j = u_k$ and $(q^V_{i,j}, q^t_k)$ is reachable or
\item $z^i_j = \overline{u_k}$ and $(q^V_{i,j}, q^f_k)$ is reachable.
\end{itemize}

This guarantees that whenever $(q^D_{1,j}, q^P_0)$ is reachable in a run of $E\| P'$, the run continues to $(q^D_{1,j+1}, q^P_0)$ or $(success, q^P_0)$ and does not reach $(deadlock, q^P_0)$.

In the first case, it has to be that $(q^P_0, x^D_k, q^t_k) \in T$ or $t(u_k) = \true$.

In the second case, it has to be that $(q^P_0, x^D_k, q^f_k) \in T$ or $t(u_k) = \false$.

Hence, for every $j \in [1,m]$ there is $i \in [1,3]$ such that
if $z^i_j = u_k$ then $t(u_k) = \true$
and if $z^i_j = \overline{u_k}$ then $t(u_k) = \false$,
or $t$ satisfies all clauses in $C$.
\end{proof}

\begin{claim}
Problem~\ref{probgeneral} is PSPACE-complete.
\end{claim}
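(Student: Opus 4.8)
As usual for a \textsc{Pspace}-completeness claim, the proof has two halves: membership in \textsc{Pspace} and \textsc{Pspace}-hardness. The plan is to get membership by ``guess a small completion, then model-check'', and hardness by observing that the completion problem contains the verification problem for networks of finite-state machines as the special case in which the process automata are already saturated and admit no completion.

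\textbf{Membership.} The first observation is that a completion is a polynomial-size certificate: for each process $P_i=(Q_i,q_{0,i},I_i,O_i,T_i)$ every candidate new transition lies in $Q_i\times(I_i\cup O_i)\times Q_i$, a set of polynomial size, so the whole tuple $(T_1,\dots,T_n)$ is polynomially bounded. A \textsc{Pspace} procedure can therefore enumerate all such tuples one at a time using a counter of polynomially many bits, and for each one (i)~test that every $P_i'$ is deterministic, which is a purely syntactic, polynomial-time check, and (ii)~verify that $\Pi = E_1\|\cdots\|E_m\|P_1'\|\cdots\|P_n'$ is deadlock-free, safe, has empty accepting language, and --- in the relevant variant --- is weakly or strongly non-blocking. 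Although $\Pi$ has exponentially many global states, each of these is a reachability or repeated-reachability question about $\Pi$: a single global state plus a binary step-counter takes only polynomial space, a reachable deadlock / error / blocking state is detected by guessing a path on the fly, and a reachable accepting lasso by the standard guess-a-cycle routine, so all of the checks are in \textsc{Npspace} $=$ \textsc{Pspace} by Savitch's theorem. Running them inside the enumeration keeps the whole procedure in polynomial space.

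\textbf{Hardness.} We reduce from the verification problem for a network of communicating finite-state machines, which --- as noted in the body --- is \textsc{Pspace}-complete already for safety properties of distributed protocols; since \textsc{Pspace} is closed under complement we take as source the problem ``given $A_1,\dots,A_k$ with designated error states, is $A_1\|\cdots\|A_k$ safe?''. From such an instance we build an instance of Problem~\ref{probgeneral} as follows. First, pad each $A_j$ so that from every non-output state it can receive every message of its input alphabet, by routing every missing input to a fresh absorbing ``sink'' state that self-loops on all of $A_j$'s inputs; this keeps $A_j$ deterministic and does not change whether any error state is reachable, since the sinks carry no errors and cannot be left. Then let the environment automata be the padded $A_1,\dots,A_k$, a safety monitor recording the error states, and a one-state automaton $D$ whose only transition is a self-loop on a fresh output symbol $\sigma$ occurring nowhere else; and let the single process automaton $P_1$ have empty input and output alphabets and one state with no transitions. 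Now $P_1$ admits no completion at all (there are no candidate transitions over an empty interface), $D$ always offers $\sigma$ so $\Pi$ has no reachable deadlock whatsoever, the padding makes $\Pi$ strongly (hence weakly) non-blocking, there is no liveness monitor so the emptiness clause is vacuous, and $\Pi$ is safe iff the padded --- equivalently the original --- product is. Hence the constructed instance of Problem~\ref{probgeneral} is solvable iff the verification instance is positive, the translation is clearly polynomial, and since $D$ and $P_1$ are inert in every relevant sense the argument works uniformly for the three variants of Problem~\ref{probgeneral}.

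\textbf{Where the difficulty lies.} The determinism test and the bound on the size of completions are routine. The part needing real care is the membership argument: one must be explicit that the exponentially large product $\Pi$ can nevertheless be checked --- simultaneously for deadlock-freedom, safety, emptiness of its accepting runs, and (strong or weak) non-blockingness --- in polynomial space, via on-the-fly search combined with Savitch's theorem, so that ``guess a completion, then verify'' stays in \textsc{Pspace} rather than merely in \textsc{Expspace}. On the hardness side the only subtlety is making the padding components genuinely inert --- $P_1$ truly uncompletable, $D$ truly deadlock-preventing, the sink-padding truly non-blocking and error-preserving --- so that the reduction faithfully tracks the verification condition; once that is in place, \textsc{Pspace}-hardness follows from the known hardness of reachability in products of finite automata.
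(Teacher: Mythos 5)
Your overall strategy coincides with the paper's: membership via ``guess (or enumerate) a polynomial-size completion, check determinism syntactically, and model-check the exponential product on the fly in polynomial space, then invoke NPSPACE $=$ PSPACE''; hardness via a degenerate instance in which the process automata admit no completion, so the problem collapses to verifying the environment, which is PSPACE-hard. Your membership argument is a correct, more explicit version of the paper's one-liner. Your hardness argument is in fact \emph{more} careful than the paper's, which simply takes the set of process automata to be empty and silently ignores the deadlock-freedom, emptiness, and non-blockingness conjuncts of Problem~\ref{probgeneral}; your one-state automaton $D$ with a fresh self-looped output is exactly the right device for discharging deadlock-freedom, and the empty-interface $P_1$ correctly forces the unique candidate completion to be the empty one.

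The one step that does not hold as stated is the claim that input-enabling each $A_j$ by routing missing inputs to an absorbing sink ``does not change whether any error state is reachable.'' Under the composition semantics of Section~\ref{subsec_composition}, a sender is \emph{blocked} in a global state in which some automaton has the sent message in its input alphabet but no outgoing transition on it: the product transition simply does not exist. Adding the sink transitions unblocks such senders, so the padded product can reach global states --- and hence error states in other components --- that the original product cannot; the reduction as written could therefore map a safe instance to an unsafe one. The repair is easy: either reduce from a PSPACE-hard verification problem whose automata are already receptive on their input alphabets (so no padding is needed), or restrict attention to the variant of Problem~\ref{probgeneral} without the optional non-blockingness clause, where $D$ alone suffices and safety is preserved exactly. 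With that adjustment your proof is complete and, unlike the paper's two-line argument, actually accounts for every conjunct in the definition of a valid completion.
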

\begin{proof}
Problem~\ref{probgeneral} is in NPSPACE since we can guess a set of
completions, and then check whether the requirements of the problem
are satisfied. Checking whether the
resulting process automata are deterministic can be done in polynomial time.
Checking whether the product automaton is safe, empty, and deadlock-free
can all be done in PSPACE. Since NPSPACE = PSPACE, Problem~\ref{probgeneral} is in PSPACE.

PSPACE-hardness can be seen by observing that a special case of
Problem~\ref{probgeneral} is when the set of process automata is empty.
This special case of determining the safety of a set of environment automata
is at least as hard as reachability, which is PSPACE-hard.
\end{proof}

\end{document}